\documentclass[11pt]{article}

\usepackage{fullpage}

\usepackage{amssymb,amsthm,mathtools,enumerate,thmtools,thm-restate,bm}
\usepackage[pagebackref,colorlinks=true,urlcolor=blue,linkcolor=blue,citecolor=blue]{hyperref} 

\usepackage[nameinlink, capitalize]{cleveref}
\usepackage[vlined,boxed,commentsnumbered, ruled,linesnumbered]{algorithm2e}

\SetKwInOut{Given}{Input}
\SetKwFor{RepTimes}{Repeat}{times}{}

\usepackage{dsfont}

\usepackage[noend]{algpseudocode}

\usepackage[margin=1in]{geometry}

\usepackage[table]{xcolor}
\usepackage{booktabs}
\aboverulesep=0ex
\belowrulesep=0ex

 \usepackage{array}
 \definecolor{lightgreen}{RGB}{180,240,180}
 \definecolor{lightred}{RGB}{255,220,230}

\declaretheorem[name=Theorem, parent=section]{theorem}
\declaretheorem[name=Corollary, sibling=theorem]{corollary}

\declaretheorem[name=Lemma, sibling=theorem]{lemma}
\declaretheorem[name=Claim, sibling=theorem]{claim}

\newtheorem*{Remark}{Remark}

\theoremstyle{definition}
\declaretheorem[name=Definition, sibling=theorem]{definition}



\newcommand\R{\mathbb{R}}

\newcommand\N{\mathbb{N}}

\newcommand\F{\mathbb{F}}

\newcommand\Q{\mathbb{Q}}

\newcommand\card[1]{\left| {#1} \right|}

\newcommand\set[1]{{\left\{ #1 \right\}}}

\newcommand\Prob[2]{{\Pr_{#1}\left[ {#2} \right]}}
\newcommand\cProb[3]{{\Pr_{#1}\left[ \left. #2 \;\right\vert #3 \right]}}
\newcommand\Expc[2]{{\mathop{\mathbb{E}}_{#1}\left[ {#2} \right]}}

\newcommand\ceil[1]{\left\lceil{#1}\right\rceil}

\newcommand\floor[1]{\left\lfloor{#1}\right\rfloor}

\newcommand\cube[1]{ { \set{0,1}^{#1} } }

\newcommand\eps{\varepsilon}
\newcommand\poly{\text{poly}}

\newcommand*\xor{\mathbin{\oplus}}

\newcommand*\bxor{\mathbin{\bigoplus}}

 \newcommand{\ball}{B}
\SetKwProg{Procedure}{Procedure}{}{}
\SetKwInOut{Given}{Given}

\DeclareMathOperator{\maj}{\mathsf{maj}}

\newcommand\ind[1]{{\mathds 1}_{#1}}




\newcommand{\cD}{{\cal D}}

\newcommand{\cL}{{\cal L}}

\newcommand{\cN}{{\cal N}}

\newcommand{\cP}{{\cal P}}

\newcommand{\cT}{{\cal T}}

\newcommand{\srad}{r}

\newcommand{\oracle}[1]{\mathcal{O}_{#1}}
\newcommand{\blr}[1]{\ensuremath{\text{\sc XorTest}_{#1}}}

\newcommand{\Dist}{{\text{Dist}}}

\newcommand{\fyes}{f_{_{\text{YES}}}}
\newcommand{\fno}{f_{_{\text{NO}}}}
\usepackage{soul}
\newcommand{\short}[2]{#2}

\title{On Optimal Testing of Linearity}

\author{
{Vipul Arora\thanks{National University of Singapore. Supported in part by NRF-AI Fellowship R-252-100-B13-281. Email: \href{mailto:vipul@comp.nus.edu.sg}{\texttt{vipul@comp.nus.edu.sg}}}}
\and {Esty Kelman\thanks{Boston University and Massachusetts Institute of Technology. Supported in part by the National Science Foundation under Grant No. 2022446 and in part by NSF TRIPODS program (award DMS-2022448). Email: \href{mailto:ekelman@mit.edu}{\texttt{ekelman@mit.edu}}}} 
\and {Uri Meir\thanks{Tel-Aviv University. Email: \href{mailto:urimeir.cs@gmail.com}{\texttt{urimeir.cs@gmail.com}}}}
}

\date{\today}

\begin{document}

\maketitle

\begin{abstract} 
Linearity testing has been a focal problem in property testing of functions. We combine different known techniques and observations about linearity testing in order to resolve two recent versions of this task.

First, we focus on the online manipulations model introduced by Kalemaj, Raskhodnikova and
Varma (ITCS 2022 \& Theory of Computing 2023). In this model, up to $t$ data entries are adversarially manipulated after each
query is answered. Ben-Eliezer, Kelman, Meir, and Raskhodnikova (ITCS 2024) showed an asymptotically optimal linearity tester that is resilient to $t$ manipulations per query, but their approach fails if $t$ is too large. 
We extend this result, showing an optimal tester for almost any possible value of $t$. First, we simplify their result when $t$ is small, and for larger values of $t$ we instead use sample-based testers, as defined by Goldreich and Ron (ACM Transactions on Computation Theory 2016).
A key observation is that sample-based testing is resilient to online manipulations, but still achieves optimal query complexity for linearity when $t$ is large.
We complement our result by showing that when $t$ is \emph{very} large, any reasonable property, and in particular linearity, cannot be tested at all.  

Second, we consider linearity over the reals with proximity parameter $\varepsilon$. Fleming and Yoshida (ITCS 2020) gave a tester using
$O(1/\varepsilon\ \cdot log(1/\varepsilon))$ queries. We simplify their algorithms and modify the analysis accordingly, showing an optimal tester that only uses $O(1/\varepsilon)$ queries.
This modification works for the low-degree testers presented in Arora, Bhattacharyya,
Fleming, Kelman, and Yoshida (SODA 2023) as well, resulting in optimal testers for degree-$d$ polynomials, for any constant degree $d$.

\end{abstract}
\thispagestyle{empty} 
\newpage
\setcounter{page}{1}

\section{Introduction}

In the field of \emph{property testing}\cite{RS96, GGR98}, a randomized algorithm is given oracle access to a large object, and a promise that the object either has some property (YES instances), or is ``far'' from having it (NO instances), under some notion of distance.
The goal of the algorithm is to distinguish between the two cases with success probability at least $2/3$, where the complexity is measured by the number of oracle calls made\footnote{This is well justified, as the running time is typically polynomial in the query complexity.}.
\paragraph{Linearity testing.}
In this manuscript, we focus on testing functions over a field $\F$ (in particular $\F = \F_2$ and $\F = \R$).
That is, the tester is allowed oracle access to some function $f: \F^n \to \F$ and distinguishes functions with a desired property from ones that are far from \emph{any} function satisfying the property.
The most studied property of functions is linearity, for which a tester was first given in~\cite{BLR93}, perhaps the earliest instance of a property tester in the literature.
The test, dubbed ``the $3$ points test'' simply draws two random points $x,y \in \F^n$ and accepts only if $f(x) + f(y) = f(x \oplus y)$, where $+$ denotes addition over $\F$, and $\oplus$ denotes point-wise addition. 
By definition, if $f$ is linear, this equality must hold for any pair $x,y$.
It was shown in various settings that if $f$ is $\eps$-far from linear, there is an \emph{inequality} for at least an
$\Omega(\eps)$-fraction
of the pairs $x,y$.\cite{BLR93,BellareCHKS96,FlemingYoshida20}.
Thus, repeating the process $\theta(1/\eps)$ times defines a tester for linearity with the same query complexity. We note this tester satisfies the additional requirement of \emph{never} rejecting a linear function, making it a \emph{one-sided error} tester for linearity.

It is well known that $\Omega(1/\eps)$ oracle calls are necessary to test linearity (and most reasonable properties~\cite{fischer2024basic}), proving the 3 points test of \cite{BLR93} is optimal in terms of query complexity.

\subsection{Online manipulations over the Boolean field}

Linearity was studied extensively over the Boolean field 
$\F_2$~\cite{BLR93, BellareS94, FeigeGLSS96, BellareCHKS96, BellareGS98, Trevisan98, SudanT98,  SamorodnitskyT00, HastadW03,Ben-SassonSVW03, Samorodnitsky07, SamorodnitskyT09, ShpilkaW06, KaufmanLX10, KalemajRV22}, and an entire survey was dedicated to this problem~\cite{RasR16}.  Before we present the online manipulation model, we first review prominent ideas from previous works that we later use.

\paragraph{The $k$-point test.} 
Recently, in the context of online manipulations over $\F = \F_2$ (which we describe soon), the $k$-points test was proposed by~\cite{KalemajRV22}. In this test, which generalizes the $3$-point test of \cite{BLR93}, the algorithm chooses points $x_1,\dots, x_{k-1}$ at random, and accepts only if\footnote{We use $[k]$ for the set $\set{1,\dots,k}$.} $\sum_{i\in[k-1]} f(x_i) = f\left(\xor_{i\in[k-1]} x_i\right)$.
It was later shown by~\cite{ben2024property} that repeating the $k$-points test $\Theta(1/(k\eps))$ times produces an optimal linearity tester for any $k = O(1/\eps)$.

\paragraph{Sample-based testers.}
Linearity over $\F_2$ was also considered in~\cite{goldreich2016sample} using a weaker oracle access, where the tester cannot retrieve $f(x)$ for a value $x$ of its choice, but instead each oracle call outputs a pair $(x, f(x))$ with a uniformly random choice of $x \in_R \F_2^n$. A tester using such oracle access is called a \emph{sample-based} tester~\cite{GGR98}.
It was shown in~\cite{goldreich2016sample} that $\Theta\left(n + 1/\eps\right)$ samples are sufficient and necessary to test linearity.
The gist is that a linear function is determined by its value on a basis $B$ of the entire space $\F_2^n$. After $\Theta(n)$ samples, one can extrapolate \emph{some} linear function $g$, and use the remaining $1/\eps$ samples to compare $f$ and $g$. A formal argument for this setting, without online manipulations, appears in \autoref{sec:sample_based}.

\vspace{10mm}
The \emph{online manipulation model} was recently introduced by~\cite{KalemajRV22}.
In this model, after each query is answered, $t$ data points go through manipulations. These can be either (1) erasures, where an entry $f(x)$ is replaced with $\bot$; or (2) corruptions, where an entry $f(x)$ can be changed to any value in the range of $f$.

In particular, they study linearity testing over $\F_2$.
First, they note the $3$-points-test is fragile in the presence of an adversary: a tester that insists on querying a triplet of the form $(x, y, x+y)$ must make $\Omega\left( t \right)$ queries, since fewer queries allow the adversary to erase all (new) pairs that were created at each step.
To overcome this, they devise the $k$-points test, which creates more combinations, and eventually leads to a tester with query complexity $O(\log t / \eps)$.
It was then pointed out in~\cite{ben2024property} that the $k$-point test not only preserves the probability of spotting a violation, but actually increases it by a factor of $\approx k$ --- this is enough to reach a query complexity of $O\left(\log t + 1/\eps\right)$, which is optimal. Indeed, a lower bound of $\Omega(\log t)$ was shown for this model by\cite{KalemajRV22}, and an $\Omega(1/\eps)$ holds even without manipulations.
Another recent work by Minzer and Zheng~\cite{MinzerZ24} showed a striking $O(\log ^{\Theta(d)}(t))$ tester for degree-$d$ polynomials over any finite field. For the special case of linearity ($d=1$) over $\F_2$, however, their result is suboptimal
, achieving a query complexity of $O\left(\log^6(t/\eps)/\eps\right)$.

\paragraph{How many manipulations can the tester handle?}
All the above testers share a similar limitation: they all require $t \leq 2^{n/c}$ for some $c \geq 2$ (See~\autoref{table:online_ersures_results} for the precise values). 
It is natural to ask what is the highest value of $t$ for which linearity is still testable. Clearly, $t = 2^n$ is too much, but how close can we get?

In \autoref{sec:online_manipulations} we fully resolve this question.
Our observation, roughly speaking, is that all previous testers include some queries that are ``too predictable'', making them susceptible to adversarial manipulations.
However, this only problematic when the adversary has very large manipulation budget $t = 2^{\Omega(n)}$, in which case we can simply apply the sample-based tester instead, killing two birds with one stone.
First, sample-based testers almost automatically overcome online manipulations. Second, the seemingly high sample complexity of the tester is actually optimal as in this regime we have $\Theta(\log(t) + 1/\eps) = \Theta(n + 1/\eps)$. Formally, we have the following.

\begin{theorem}[Doubly-optimal tester]
     \label{thm:online_erasures}
   There exists a constant $c>0$ such that for all $n\in\N$, $\eps \in (0, 1/2]$ and $t \leq c\cdot \min\set{\eps^2, 1/n^2} \cdot 2^n$, 
   there exists an $\eps$-tester for linearity of multivariate functions $f:\{0,1\}^n\to\{0,1\}$ that is resilient to $t$-online manipulations budget-managing adversaries and makes  $O\left(\max\set{1/\eps, \log t}\right)$ queries. Furthermore, in the case of erasure adversary, the tester has one-sided error.        
\end{theorem}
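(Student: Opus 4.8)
The plan is to case-split on the size of $t$ and to run a different (known-style) tester in each regime, arranging the two regimes so that together they cover exactly the range $t\le c\cdot\min\{\eps^2,1/n^2\}\cdot 2^n$ and so that each tester makes $O(\max\{1/\eps,\log t\})$ queries.

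\textbf{Small $t$.} Let $c_0\ge 2$ be the constant for which the $k$-points-based tester of~\cite{ben2024property} is valid (it requires $t\le 2^{n/c_0}$). When $t\le 2^{n/c_0}$ I would simply invoke it: it is resilient to $t$-online manipulations by a budget-managing adversary, has one-sided error against erasure adversaries, and makes $O(\log t+1/\eps)=O(\max\{1/\eps,\log t\})$ queries. Since it is only needed in this restricted range, the union bounds in its analysis — comparing the $2^{\Theta(k)}$ check-combinations a session creates against the at most $kt$ entries the adversary can touch during it — have a lot of slack, so the argument streamlines; I treat it here as a black box.

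\textbf{Large $t$.} When $2^{n/c_0}<t\le c\cdot\min\{\eps^2,1/n^2\}\cdot 2^n$ I would instead run a sample-based tester in the spirit of Goldreich--Ron: draw $m_1=n+O(1)$ samples $(x,f(x))$; if the non-$\bot$ first coordinates fail to span $\F_2^n$ (which fails with probability $\le 2^{-\Omega(1)}$), accept; if they span but are inconsistent with every linear function, reject; otherwise let $g$ be the unique linear function they determine. Then draw $m_2=O(1/\eps)$ more samples and reject iff some non-$\bot$ one disagrees with $g$. This uses $m=m_1+m_2=\Theta(n+1/\eps)$ samples, and since $t>2^{n/c_0}$ forces $n<c_0\log t$, we get $m=O(\log t+1/\eps)=O(\max\{1/\eps,\log t\})$. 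The manipulation-free analysis — which I would record formally in \autoref{sec:sample_based} — gives completeness with one-sided error (a linear $f$ makes $g=f$ on the spanning event, so every non-$\bot$ sample agrees) and soundness (an $\eps$-far $f$ has $\dist(f,g)\ge\eps$ for the extracted linear $g$, so $O(1/\eps)$ fresh samples catch a disagreement with probability $\ge 0.99$).

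The step I expect to need the most care — and the conceptual heart of the argument — is that the sample-based tester is essentially immune to online manipulations. Each of the $m$ samples is a uniform point of $\F_2^n$, drawn independently of the entire history, so before the $i$-th sample the adversary has altered at most $(i-1)t\le mt$ distinct entries (adaptivity and budget-banking do not help, since only the number of touched entries matters), and hence the $i$-th sample hits an altered entry with probability at most $mt/2^n$. A union bound over all $m$ samples bounds the probability that \emph{any} sample is affected by $m^2t/2^n$; plugging in $m=\Theta(n+1/\eps)$ and $t\le c\cdot\min\{\eps^2,1/n^2\}\cdot 2^n$, and using both $n^2\cdot\min\{\eps^2,1/n^2\}\le 1$ and $\eps^{-2}\cdot\min\{\eps^2,1/n^2\}\le 1$, this is $O(c)$, which is below $1/100$ once $c$ is a sufficiently small absolute constant. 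Conditioned on that event the tester sees the true $f$ at every sample, so the manipulation-free analysis applies verbatim; one-sidedness against an erasure adversary is immediate from the design, since a $\bot$ answer is never grounds for rejection and a linear $f$ returns its true value on every non-erased sample. Finally, the two regimes tile the admissible range of $t$: if $c\cdot\min\{\eps^2,1/n^2\}\cdot 2^n<2^{n/c_0}$ the small-$t$ tester alone covers all admissible $t$, and otherwise $\{t\le 2^{n/c_0}\}$ and $\{2^{n/c_0}<t\le c\cdot\min\{\eps^2,1/n^2\}\cdot 2^n\}$ partition them; tuning the hidden constants so that the non-spanning, soundness, and manipulation-hit error contributions sum to below $1/3$ completes the proof.
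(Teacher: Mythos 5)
Your overall strategy is the same as the paper's: split into two regimes, use a $k$-points-style tester when $t$ is small, and use the Goldreich--Ron sample-based tester when $t$ is large, with the key observation that uniform samples evade a budget-managing adversary (the $q^2t/2^n$ union bound, the choice of the constant $c$, and the one-sidedness via never rejecting on $\bot$ all match the paper's Case~II essentially verbatim). The difference is in the small-$t$ branch, and that is where there is a genuine gap: you draw the case boundary as a condition on $t$ alone ($t\le 2^{n/c_0}$) and assert that the tester of \cite{ben2024property} is valid in exactly that range, with no dependence on $\eps$. That assertion cannot be right as stated. Its validity range, as recorded in Table~\ref{table:online_ersures_results}, is $t\le \tilde{O}(2^{n/2})$ with hidden $\poly(n/\eps)$ factors, and indeed any $\eps$-free guarantee of the form ``works for all $\eps\in(0,1/2]$ whenever $t\le 2^{n/c_0}$'' would contradict \cref{thm:online_erasures_lb}: for $\eps\le 2^{-n/2}$ even $t=1$ exceeds $20\eps^2 2^n$ and testing is impossible. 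Structurally, the $k$-points analysis needs the number of XOR combinations (governed by $k\approx \log t+1/\eps$, since soundness amplification forces $k\gtrsim 1/\eps$) to fit inside $\F_2^n$ while exceeding the adversary's accumulated budget, so the admissible regime is a joint condition on $t$ \emph{and} $\eps$, roughly $\log t+1/\eps=O(n)$ (the paper's Case~I condition is $m=4\ceil{\log t+10/\eps}\le n/3$).

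Concretely, pairs such as $\eps=2^{-n/8}$, $t=2^{n/4}$ satisfy the theorem's hypothesis $t\le c\min\set{\eps^2,1/n^2}2^n$ and your small-$t$ condition, yet you have no justification that the black box covers them; your large-$t$ branch never sees them because $t\le 2^{n/c_0}$. The fix is exactly the paper's choice of boundary: decide by $m=\Theta(\log t+1/\eps)$ versus $n$ rather than by $t$ alone, sending every pair with $1/\eps=\Omega(n)$ or $\log t=\Omega(n)$ to the sample-based tester (whose $O(n+1/\eps)$ complexity is then within the $O(\max\set{1/\eps,\log t})$ budget, and whose manipulation-avoidance bound you already verified under the theorem's hypothesis), and using the $k$-points argument only when $\log t+1/\eps=O(n)$ --- a regime where either a careful citation of \cite{ben2024property} restricted to that range, or the paper's short re-derivation with $k=m$ (Lemma~\ref{lem:krv_case_no_erasures}), goes through. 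With that repaired boundary your argument matches the paper's; as written, the small-$t$ branch rests on an unverified (and in full generality false) claim about the black box's validity range.
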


While our tester achieves optimal query complexity, it also handles essentially the largest manipulation budget $t$ for which linearity is still testable, making it \emph{doubly optimal}.
Indeed, in~\cref{sec:impossibility_result}, we show an impossibility result for the online model with the following implication for linearity:

\begin{theorem}
    \label{thm:online_erasures_lb}
    $\eps$-testing linearity in the online manipulation  model is impossible for any $\eps \in (0, 1/2)$, provided that $t \geq 20\eps^2 2^n$.
\end{theorem}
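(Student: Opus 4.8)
The plan is the standard indistinguishability argument: supposing an $\eps$-tester existed, I would exhibit a \emph{yes}-distribution over linear functions (with a trivial adversary) and a \emph{no}-distribution over functions that are $\eps$-far from linear (with an explicit corruption adversary) on which the tester's acceptance probabilities provably differ by less than $1/3$, which is impossible. A preliminary remark disposes of the regime $t \ge 2^n$: there, immediately after the first query $x_1$ is answered, the adversary overwrites \emph{all} of $f$ (at most $2^n \le t$ entries) with a fresh uniformly random linear map $g$ conditioned on $g(x_1)$ equalling the value just returned, and never touches $f$ again; taking the no-instance to be $\ell_0 \oplus \ind{S}$ for a uniformly random linear $\ell_0$ and a uniformly random $S \subseteq \F_2^n \setminus \{0\}$ of size $\lceil \eps 2^n \rceil$, the first answer $\ell_0(x_1) \oplus \ind{S}(x_1)$ has exactly the same marginal as in the yes-case $f = \ell_0$ (it is $0$ if $x_1 = 0$ and uniform otherwise), and conditioned on it the rest of the transcript is identically distributed in the two cases, so the advantage is $0$. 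Since $20\eps^2 \ge 1$ already forces $t \ge 2^n$, from here on I may assume $t < 2^n$, which forces $\eps < 1/\sqrt{20} < 1/4$.

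For $t < 2^n$, take $\fyes = \ell_0$ (uniform linear, no manipulations) and $\fno = \ell_0 \oplus \ind{S}$ with $S \subseteq \F_2^n \setminus \{0\}$ uniform of size $m := \lceil \eps 2^n \rceil$, independent of $\ell_0$. The adversary fixes a uniformly random ordering of $S$ at the outset and, after each answered query, \emph{corrects} the next $t$ not-yet-corrected elements of $S$ (resets their values to those of $\ell_0$); after at most $m/t + 1$ rounds $S$ is exhausted and $\fno \equiv \ell_0$ for the rest of the interaction. A routine Chernoff-and-union-bound over the $2^n - 1$ nonzero linear maps shows that $\fno$ is $\eps$-far from linear with probability $1 - o(1)$ --- its distance to $\ell_0$ is exactly $m/2^n \ge \eps$, and to every other linear map it is $\tfrac12 - o(1) > \eps$ since $\eps < 1/4$ --- and I condition on this event (which costs only $o(1)$ in total variation).

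Now couple the yes- and no-processes via the same tester coins and the same $\ell_0$, and let $B$ be the event that the tester ever queries an element of $S$ the adversary has not yet corrected --- equivalently, the event that the two transcripts ever diverge. Writing $C_i$ for the corrected set just before the $i$-th answer, so $|C_i| = \min((i-1)t,\,m)$ and $C_i$ is the prefix of a uniformly random ordering of $S$, the crucial observation is that, conditioned on no bad event so far, the tester has seen only values of $\ell_0$, so $x_i$ is a deterministic function of its coins and of $\ell_0$, hence independent of $(S, \text{ordering})$; therefore $\Pr[x_i \in S \setminus C_i] \le \frac{\max(m - (i-1)t,\,0)}{2^n}$, and summing over $i \ge 1$ yields $\Pr[B] \le \frac{1}{2^n}\sum_{j \ge 0}\max(m - jt,\,0)$. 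If $t \ge m$ this sum equals $m$, so $\Pr[B] \le m/2^n \le \eps + 2^{-n} < 1/3$ using $\eps < 1/\sqrt{20}$; if $t < m$, then $20\eps^2 2^n \le t < m \le \eps 2^n + 1$ forces $\eps < 1/20 + o(1)$, while $\sum_{j\ge 0}\max(m - jt,\,0) \le m + \tfrac{m^2}{2t}$ gives $\Pr[B] \le \frac{m}{2^n} + \frac{m^2}{2t\cdot 2^n} \le \eps + \tfrac{1}{10} + o(1) < 1/3$. Either way $\Pr[B] < 1/3$, so off $B$ the transcripts coincide, the two acceptance probabilities differ by at most $\Pr[B] + o(1) < 1/3$, and no tester can accept $\fyes$ (linear) with probability $\ge 2/3$ while rejecting the conditioned $\fno$ ($\eps$-far) with probability $\ge 2/3$.

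The step I expect to be the crux is the bound on $\Pr[B]$. Making the ``$x_i$ is independent of $(S, \text{ordering})$'' claim rigorous calls for coupling with a \emph{clean} run in which the tester is always fed $\ell_0$'s values, so that $x_i$ is literally a function of (coins, $\ell_0$); and the reason the argument applies to testers of \emph{unbounded}, not merely sublinear, query complexity is that the union bound self-truncates after $O(m/t) = O(1/\eps)$ rounds --- once $S$ is fully corrected there is nothing left to detect. This self-truncation, together with the cushion $\eps < 1/\sqrt{20}$ forced by $t < 2^n$, is precisely where the constant in the hypothesis $t \ge 20\eps^2 2^n$ is spent.
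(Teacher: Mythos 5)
Your argument is correct in substance but takes a genuinely different route from the paper. The paper never argues about linearity directly: it proves a generic impossibility statement (\cref{thm: fisher lb for online model}) by adapting the offline argument of~\cite{fischer2024basic} --- fix a closest property-member $g$ to an $\alpha$-far function $\fno$, consider only the first $m_0=\alpha/(10\eps)$ queries, let $D'$ be the $\ell$ points of the disagreement set that the tester is \emph{least likely} to query in that window, and let the adversary spend its budget hiding exactly $D'$; the runs on $g$ and on the hybrid $g_{D'}$ can then diverge only on the low-probability event that $D'$ is hit early, and the linearity statement follows by plugging in $q=2$, $\alpha=1/2$, $\fno=1+x_1$. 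You instead build an explicit hard distribution for linearity (random linear $\ell_0$ versus $\ell_0\oplus\ind{S}$ for a random $\eps$-dense $S$) with an adversary that gradually repairs $S$, and you bound the probability of ever catching an unrepaired point via a clean-run coupling and a self-truncating union bound, handling $t\ge 2^n$ by a separate simulation trick. The trade-off: the paper's ``least-likely points'' device makes the hard hybrid deterministic, so no coupling or conditioning issues arise, and the argument applies verbatim to any property with a far point and to both erasure and corruption adversaries; your argument is self-contained for linearity, avoids Fischer's reduction to nonadaptive testers querying only the disagreement set, and lands on the same constant with room to spare.

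Two caveats. First, both of your constructions use \emph{corruption} adversaries (resetting values of $S$ to $\ell_0$, or overwriting $f$ by a linear map), so as written you only rule out testing against corruptions, whereas the theorem is intended to hold even against the weaker fixed-rate \emph{erasure} adversaries (the paper's proof explicitly treats both, cf.\ the Remark after \cref{def:online_tester}). Your yes-world, which has no adversary and hence no $\bot$ answers, would be trivially distinguishable from an erasure no-world; the fix is exactly the paper's move of running the yes-instance $\ell_0$ against the \emph{same} adversary (erasing $S$ in the same order), which is legitimate since the tester must accept every linear function against every strategy. Second, your in-line claim that $x_i$ is independent of $(S,\sigma)$ ``conditioned on no bad event so far'' is not literally true --- that conditioning biases $S$ --- but the clean-run coupling you yourself flag is the correct repair and should simply replace that sentence: with queries $\tilde x_i$ defined from (coins, $\ell_0$) alone, one gets $B\subseteq\tilde B$ and $\Pr[\tilde B]\le\sum_i \card{S\setminus C_i}/(2^n-1)$ (note the denominator $2^n-1$ rather than $2^n$, an immaterial slip), after which your arithmetic goes through comfortably below $1/3$.
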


The argument we show in~\cref{sec:impossibility_result} is actually much more general. We consider properties of functions $\F_q^n \to \F_q$, where the input size is $N := q^n$,
and show that many properties become untestable in the online manipulation model when $t$ is too large.
Trivially, $t = N$ makes testing impossible as the entire input can be manipulated after just a single query is answered authentically.
We show that many properties are untestable even when $t = \Theta(\eps^2 N)$, which is the strongest generic statement one can hope for (as it is tight with~\cref{thm:online_erasures}).

The intuition is this: consider a property that requires $\Omega(1/\eps)$, and recall $t$ is an $\eps^2$-fraction of the input size.
On the one hand, after $c/\eps$ queries (for small enough $c$) the tester still cannot distinguish certain YES and NO instances with high enough probability.
On the other hand, at this point an online adversary can already manipulate an $\eps$-fraction of the original input, enough to completely blur the initial difference between a YES and a NO instance. Hence, any additional query is useless, and the tester is doomed to fail.

To formally show this for a wide variety of properties, we leverage a recent elegant argument of~\cite{fischer2024basic}, that constructs a concrete hard distribution for any property $\mathcal{P}$ that satisfies certain ``sanity'' conditions (to be specified later).
Although our result is stated for functions $f:\F_q^n \to \F_q$, it can be easily adapted to different input types as well.

\begin{table*}
	\centering
	\begin{tabular}{|l|l|l|l|}
		\toprule
		{Result type}
		& {Query Complexity}
		& {Manipulation budget} 
		& {Reference}\\ 
		\toprule
        \midrule
	    Lower bound
		&\cellcolor{lightred}$\Omega(\log(t) + 1/\eps)$
		& any $t$
		& \cite{KalemajRV22}
		\\
	\midrule
	    Impossibility (new)
		& Untestable
		&\cellcolor{lightred}$t \geq \tilde{\Omega}\left(2^{n}\right)$
		& \Cref{thm:online_erasures_lb}
		\\
		\midrule
        \midrule
	    Algorithm
		&$O(\log(t)/\eps)$
		&$t \leq \tilde{O}\left(2^{n/4}\right)$
		& \cite{KalemajRV22}
		\\
		\midrule
		Algorithm$^{\$}$
		&$O(\log^6(t/\eps)/\eps)$
		&$t \leq \tilde{O}\left(2^{n/20}\right)$
		& \cite{MinzerZ24}
		\\
        \midrule
		Algorithm
		&\cellcolor{lightgreen}$O(\log(t) + 1/\eps)$
		&$t \leq \tilde{O}\left(2^{n/2}\right)$
		& \cite{ben2024property}
		\\
        \midrule
		Algorithm (new)
		&\cellcolor{lightgreen}$O(\log(t) + 1/\eps)$
		&\cellcolor{lightgreen}$t \leq \tilde{O}\left(2^{n}\right)$
		& \Cref{thm:online_erasures}
		\\
  
		\bottomrule
	\end{tabular}
	~
	\caption{ 
	      \small A comparison of all known results for linearity in the online manipulation model.
         For the sake of clarity, bounds on $t$ hide factors of $\poly(n/\eps)$.
        Colored cells mark optimal lower and upper bounds, emphasizing the new algorithm is doubly-optimal.
        $^{\$}$The algorithm by~\cite{MinzerZ24} is devised for a more general setting of low-degree testing over $\F_q$. Its application to our case ($d = 1, q = 2$) is suboptimal and included only to give the full picture.
		}
\label{table:online_ersures_results}
\end{table*}

\subsection{Testing Over the Reals}\label{sec:intro-reals}

We start by describing the model of testing functions over the reals.
It is similar to the case of finite domains, with the important distinction that a uniform measure over the domain no longer exists. 
This is indeed an issue, as the uniform measure over finite domains is crucial to most standard notions of distance (e.g., hamming distance).
Instead, it is natural to choose the standard (multivariate) Gaussian, denoted by $\cN(\bm 0,I)$.
More generally, the \emph{distribution-free} testing framework \cite{GGR98,halevyK07}, considers a distance measure with respect to an unknown but samplable distribution over the domain.
Formally, for a measurable function $f:\R^n\to\R$, a proximity parameter $\eps>0$, an unknown but samplable distribution $\cD$ supported over $\R^n$, and a property $\cP$, $f$ is $\eps$-far from $\cP$ with respect to $\cD$ if\footnote{We use bold face, e.g. $\bm x$, to denote vectors/points in $\R^n$.}
\[\delta_{\cD}(f,\cP) :=  \inf_{g\in\cP}\left\{ \Pr_{\bm x\sim\cD}[ f(\bm x) \neq g(\bm x)] \right\}>\eps.\]

An algorithm is a \emph{distribution-free tester} for $\cP$, if given query access to $f$, sample access to $\cD$, and $\eps>0$, it accepts all $f\in\cP$ with probability at least $2/3$, and rejects all $f$ such that $\delta_{\cD}(f,\cP)>\eps$, with probability at least $2/3$.

\paragraph{Linearity vs. additivity.}
Over the reals, unlike the finite case, we have a distinction between
\begin{itemize}
\item \emph{additivity}: for all $\bm x,\bm y\in\R^n,f(\bm x)+f(\bm y)=f(\bm x+\bm y)$, and
    \item \emph{linearity}: $f(\bm x)\equiv \sum_{i=1}^n c_ix_i$, for some $\{c_i\in\R,i\in\{1,\dots, n\}\}$.
\end{itemize} 
While the two are equivalent over finite fields, some pathological examples show that functions over the reals can be additive but not linear \cite{Ham05}.
In fact, over finite domains this equivalence is key for testing results, as the $3$-point BLR test directly checks for additivity.
While the tester and its analysis are similar for both properties, linearity requires additional assumptions (see~\cite{FlemingYoshida20}). The discussion below is therefore restricted to additivity.

In order to describe previous and new testing results over the reals, we first review another useful concept from the testing literature. 

\paragraph{Self-correction.}
The notion of self-correction dates back to~\cite{BLR93} and 
has deep ties with many testing results.
To demonstrate this viewpoint, we revisit the $3$-point linearity tester over $\F_2$.
Define the local correction of $p$ according to $x$ as $g_x(p) := f(p+x) - f(x)$, That is, by querying $x$ and $p+x$ we get an indication what the value of $p$ ``should be'' in order to satisfy linearity.
Define the global correction to be
\[
    g(p) := \maj_{x} g_x(p) .
\]
Think of $g$ as a function that ``corrects'' each $p$ by assigning it the value that satisfies linearity for more choices of $x$. 
Formally the $3$-point tester asks whether $f(p) = g_x(p)$, but morally it asks whether $f \equiv g$, by comparing them on a random input $p$ and replacing $g$ with a ``representative'' $g_x$ which can be computed by $2$ queries to $f$.
Denoting the rejection probability of the test by $\mu$, the tester is analyzed by showing two claims: (1) if $\mu$ is small then $g$ and $f$ are close; and (2) if $\mu$ is small then $g\in \cP$.
Combining the claims, a small rejection probability implies that $f$ is close to linear (specifically to $g$), which guarantees the soundness of the tester (by counter positive).
To show (1) and (2), a crucial condition is that evaluation of $g_x(p)$ uses queries to $f$ that have the same marginal distribution (here $x$ and $p+x$ are both marginally uniform over the domain).

\paragraph{Self-correcting over the reals.}
It turns out that self-correcting a function over the reals is tricky in particular due to the marginals condition.
To this end, ~\cite{FlemingYoshida20} defined their $2$ queries linearity self-corrector in two steps. First, in a straightforward manner inside a small ball $B$ centered at the origin, where one can show the marginals condition (approximately) holds, and the analysis follows through.
The second step is extrapolating $g$ to the entire domain using the fact that an additive function respects multiplication by a rational scalar, i.e., for all $a\in \Q$ it holds that $g(ax) = a\cdot g(x)$.  

We leave the hairy details and precise definitions for~\cref{sec:real testing}, and merely point out that due to this nuisance, evaluation of $g$ is explicit in their algorithm (unlike the $3$-point test for $\F_2$ where it is implicit).
Eventually, their test asks whether $f \equiv g$ using $\Theta(1/\eps)$ comparison points, but each evaluation of $g$ is done using $\Theta\left(\log (1/\eps)\right)$ queries to $f$, resulting in query complexity $\Theta\left((1/\eps) \log (1/\eps) \right)$.

\paragraph{Our modifications and results.}
The soundness analysis is made via the self-correction paradigm: if $f$ is $\eps$-far from $\cP$, it is also $\eps$-far from $g$ (recall the distance is according to the samplable distribution $\cD$). Each comparison point sampled from $\cD$ thus has probability $\Omega(\eps)$ to indicate that $f \not\equiv g$. Taking $O(1/\eps)$ points amplify the success probability to, say, $9/10$.
However, the reason ~\cite{FlemingYoshida20} use $\Theta\left(\log (1/\eps)\right)$ queries to $f$ for each value $g(\bm p)$ is merely to reduce the error probability from a constant to $O(\eps)$.
This way, one can apply a union bound over all $O(1/\eps)$ comparison point, adding only an additional small constant error to the tester.
Our observation is that the algorithm can tolerate erroneous evaluations, thus reducing the error for all evaluations of $g$ is is unnecessary.
Indeed, when analyzing the soundness, we simply say that $\bm p$ is a witness for $f\not\equiv g$ if it satisfies both $f(\bm p) \neq g(\bm p)$ and the evaluation of $g$ succeeded. This still gives an $\Omega(\eps)$ probability for a witness, and $O(1/\eps)$ samples from $\cD$ suffice. We do not care that other evaluations of $g$ (roughly half of them) contain errors, only the one important point matters.
We get the following result.

\begin{theorem}\label{thm:additivity-and-linearity-real-tester}
    Fix $\eps\in(0,1)$.
    There exist distribution-free, one-sided error $O(1/\eps)$-query testers for both additivity of functions $f\colon\R^n\to\R$ and linearity of continuous functions $f\colon\R^n\to\R$.
\end{theorem}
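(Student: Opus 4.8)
The plan is to follow the self-correction paradigm outlined in the text, but carefully track where the $\Theta(\log(1/\eps))$ factor of \cite{FlemingYoshida20} enters and show it can be shaved off. I would start by importing the two-step self-corrector of \cite{FlemingYoshida20} essentially as a black box: a function $g$ defined first on a small ball $B$ around the origin via a local averaging/majority rule using $2$ queries per evaluation, and then extrapolated to all of $\R^n$ using rational scaling $g(a\bm x)=a\cdot g(\bm x)$. The key quantitative fact I would extract from their analysis is that a \emph{single} evaluation of $g(\bm p)$ — using only $O(1)$ queries to $f$ — succeeds (returns the "true" corrected value) with probability at least some absolute constant, say $2/3$, rather than the boosted $1-O(\eps)$ they aimed for. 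I would restate this as a lemma: "For every $\bm p$, the $O(1)$-query procedure $\widetilde{g}(\bm p)$ satisfies $\Pr[\widetilde g(\bm p)=g(\bm p)]\ge 2/3$," and similarly that when $\mu$ (the rejection probability of the basic $2$-query test) is small, $g$ is additive (resp. linear under the continuity hypothesis) and $\delta_\cD(f,g)\le\eps$ is \emph{not} the conclusion — rather, by contrapositive, if $\delta_\cD(f,\cP)>\eps$ then $\delta_\cD(f,g)>$ some constant times $\eps$, so $g\notin\cP$ forces $\mu$ large, which the basic test already catches; the interesting case is $g\in\cP$ but $f$ far from $g$.

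The heart of the argument is the soundness amplification with the weaker per-evaluation guarantee. I would sample $m=O(1/\eps)$ comparison points $\bm p_1,\dots,\bm p_m\sim\cD$ independently, and for each run the $O(1)$-query evaluation $\widetilde g(\bm p_j)$, rejecting if any $f(\bm p_j)\neq\widetilde g(\bm p_j)$ while also being consistent (the evaluation procedure can internally detect certain failures, or we simply treat a "bad" evaluation as not-a-witness). Define $\bm p_j$ to be a \emph{witness} if $f(\bm p_j)\neq g(\bm p_j)$ \emph{and} the evaluation $\widetilde g(\bm p_j)$ returned the correct value $g(\bm p_j)$. Conditioned on $f$ being $\eps$-far from $\cP$ (hence from $g$, when $g\in\cP$), we have $\Pr_{\bm p\sim\cD}[f(\bm p)\neq g(\bm p)]>\eps$, and independently over the internal randomness of the evaluation, the evaluation is correct with probability $\ge 2/3$; the two events concern disjoint sources of randomness, so each sampled point is a witness with probability $>\frac23\eps$. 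Hence the probability that none of the $m=O(1/\eps)$ points is a witness is at most $(1-\tfrac23\eps)^m\le 1/10$ for a suitable constant in the $O(\cdot)$. A detected witness makes the tester reject, giving soundness. Completeness and one-sidedness: if $f\in\cP$ then every evaluation of $\widetilde g$ returns a value consistent with $f$ (the local corrector of an additive/linear function reproduces it exactly, and rational scaling is exact), so the tester never rejects — I would make sure the "bad evaluation" detection is designed so that it never triggers on a genuinely additive $f$, which is what preserves one-sided error. The total query count is $O(1)$ per comparison point times $O(1/\eps)$ points $=O(1/\eps)$, and the basic $2$-query test (if we even run it separately, though it can be folded in) adds only $O(1)$.

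The main obstacle I anticipate is making the notion of a "detectable bad evaluation" precise without breaking one-sided error. In \cite{FlemingYoshida20} the log factor buys a union bound over all $m$ evaluations so that \emph{every} $g$-value is correct except with tiny probability; here I am deliberately allowing roughly half of the evaluations to be wrong, so I must argue that (i) a wrong evaluation on a non-witness point never causes a false reject on a YES instance, and (ii) a wrong evaluation does not mask a would-be witness in a way that biases the count — which is why I fold "evaluation correct" into the \emph{definition} of witness rather than hoping all evaluations are good. Concretely, I would examine the rational-scaling step: $\widetilde g(\bm p)$ picks a rational $a$ with $a\bm p\in B$ and returns $\tfrac1a\widetilde g_B(a\bm p)$ where $\widetilde g_B$ is one sample of the in-ball local corrector; on an additive $f$ this equals $f(\bm p)$ deterministically, so no spurious rejection ever occurs, securing one-sidedness, while on a far $f$ the local corrector is correct with constant probability by the in-ball analysis of \cite{FlemingYoshida20}, securing the witness probability bound. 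The second, more technical, obstacle is the extension to linearity (vs. additivity) and to degree-$d$ polynomials: for linearity I must invoke the continuity hypothesis to rule out the pathological additive-but-not-linear functions (citing \cite{Ham05,FlemingYoshida20}), and for the degree-$d$ generalization mentioned in the abstract I would check that the self-corrector of \cite{ABFKY23} likewise has a constant-success $O_d(1)$-query single-evaluation version, so the same "witness = far point with a correct evaluation" trick applies verbatim with $d$-dependent constants — but for \Cref{thm:additivity-and-linearity-real-tester} as stated, only the $d=1$ case is needed, so I would keep the degree-$d$ remark as a corollary-style observation rather than a full proof.
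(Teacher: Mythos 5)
Your proposal is correct and follows essentially the same route as the paper: run the Fleming--Yoshida additivity subroutine with $O(1)$ queries to guarantee (via their Lemma~8) that $g$ is additive and that a single two-query evaluation $g_{\bm x}(\bm p)$ agrees with $g(\bm p)$ with constant probability, then take $O(1/\eps)$ comparison points and define a witness as a far point whose single evaluation is correct, which occurs with probability $\Omega(\eps)$ per sample and preserves one-sided error since the evaluation is exact on additive $f$. This is precisely the paper's argument (with the inessential difference that the paper uses the constant $1/2$ from FY's lemma rather than your $2/3$, and handles linearity by plugging the modified \textsc{Query-}$g$ into FY's linearity tester under the continuity hypothesis, as you also indicate).
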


\paragraph{Low-degree testing over the reals.}
We observe that a similar modification applies to the distribution-free, low-degree tester of \cite[Theorem 1.1]{ABFKY23}, shaving off the $\log\frac{1}{\eps}$ factor from their query complexity (which is $O(d^5+\frac{d^2}{\eps}\log \frac{1}{\eps})$), resulting in an optimal $O(1/\eps)$-query complexity tester for the constant degree-$d$ regime. 
\begin{theorem}\label{thm:exact-low-degree}
    Let $d\in\mathbb{N}$, and for $L>0$, suppose $f: \R^n \to \R$ is a function that is bounded in the ball $\ball(\bm{0}, L)$. 
There exists a distribution-free, one-sided error,  $O(d^5+d^2/\varepsilon)$-query tester for testing whether $f$ is a degree-$d$ polynomial, or is $\eps$-far from all degree-$d$ polynomials over the unknown distribution $\mathcal{D}$.
\end{theorem}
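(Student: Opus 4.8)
The plan is to carry the modification behind \autoref{thm:additivity-and-linearity-real-tester} through the low-degree tester of \cite{ABFKY23}, tracking the degree-$d$ parameters. Recall the structure of their tester: a one-time \emph{setup} phase, of cost $O(d^5)$, which fixes a ball $\ball(\bm 0, \srad)$ and the auxiliary data needed to evaluate a self-corrected function $g$ (defined, as in the additivity case, locally inside the ball via the finite-difference / line-restriction characterization of degree-$d$ polynomials, and then extended to all of $\R^n$ using the fact that such a $g$ respects multiplication by rational scalars); followed by a \emph{testing} phase that draws $O(1/\eps)$ comparison points $\bm p \sim \cD$, evaluates $g(\bm p)$, and rejects if $f(\bm p) \neq g(\bm p)$. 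In \cite{ABFKY23} each evaluation of $g(\bm p)$ costs $O(d^2 \log(1/\eps))$ queries, where the $\log(1/\eps)$ factor exists only to drive the per-evaluation failure probability down to $O(\eps)$, so that a union bound over all comparison points adds just a constant to the error; overall this is the $O(d^5 + \frac{d^2}{\eps}\log\frac1\eps)$ of their Theorem 1.1.

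First I would replace the evaluation subroutine by one that spends only the $O(d^2)$ queries needed to compute $g(\bm p)$ correctly with probability at least $99/100$ for every fixed $\bm p$ --- this is exactly the constant-error regime of the plurality estimator used in \cite{ABFKY23}, prior to the extra logarithmic amplification, and it is sound because $g$ is a fixed function of $f$ and of the setup randomness, so the only error in this subroutine comes from its own internal coin tosses. Next I would redefine a comparison point $\bm p$ to be a \emph{witness} if both $f(\bm p) \neq g(\bm p)$ and the evaluation of $g(\bm p)$ returned the correct value, and have the tester reject iff it ever sees a witness.

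For soundness I would invoke, unchanged, the self-correction lemma of \cite{ABFKY23}: conditioned on the (high-probability) success of the setup, if $f$ is $\eps$-far from degree-$d$ with respect to $\cD$ then $g$ is itself a degree-$d$ polynomial and $\Pr_{\bm p \sim \cD}[f(\bm p) \neq g(\bm p)] = \Omega(\eps)$. For any such $\bm p$ the evaluation succeeds with probability at least $99/100$, so $\bm p$ is a witness with probability $\Omega(\eps)$; hence $O(1/\eps)$ comparison points suffice to reject with probability at least $2/3$, and the (possibly erroneous) evaluations at the remaining points are simply irrelevant. For completeness and one-sidedness I would note that if $f$ is genuinely degree-$d$ then $g \equiv f$ and, moreover, every line-restriction extrapolation used internally by the subroutine returns the exact value of $f$, so the computed value always equals $f(\bm p) = g(\bm p)$ and no witness is ever produced. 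Adding the two phases gives $O(d^5) + O(1/\eps) \cdot O(d^2) = O(d^5 + d^2/\eps)$ queries; equivalently, the product of the number of comparison points and the per-evaluation cost, which was $O(\frac{d^2}{\eps}\log\frac1\eps)$ in \cite{ABFKY23}, drops to $O(\frac{d^2}{\eps})$ once the $\log$ is removed.

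The main obstacle is bookkeeping rather than a new idea: one must verify that the $\Omega(\eps)$ lower bound on $\Pr[f(\bm p) \neq g(\bm p)]$ in \cite{ABFKY23} refers to the deterministic function $g$ pinned down by $f$ and the setup (so it is untouched by later evaluation errors), and that the $O(d^2)$-query estimator genuinely reaches constant failure probability --- i.e.\ that the plurality advantage in their corrector is $\Omega(1/\poly(d))$, with the hidden polynomial matching the claimed $d^2$ factor. Both are implicit in \cite{ABFKY23}; we merely stop over-amplifying. A secondary point to check is that boundedness of $f$ on $\ball(\bm 0, L)$ enters only the unchanged extension-by-scaling step, so neither the soundness nor the completeness argument above is affected.
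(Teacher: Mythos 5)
Your proposal is correct and follows essentially the same route as the paper: drop the $\log(1/\eps)$ amplification, evaluate the self-corrected $g$ at each comparison point with a single random direction at cost $O(d^2)$, call $\bm p$ a witness only when $f(\bm p)\neq g(\bm p)$ \emph{and} the evaluation succeeded, and invoke the ABFKY23 self-correction lemma unchanged. The only (harmless) discrepancies are cosmetic: outside the ball $g$ is extended by degree-$d$ interpolation along the radial line (not by rational scaling, which is the additivity trick), and the paper's single-direction evaluation is shown correct with probability at least $1/2$ via a union bound of $d+1$ events each of probability at most $1/(7d)$, rather than the $99/100$ you posit --- any constant suffices.
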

 We prove \autoref{thm:exact-low-degree} in \autoref{sec:low-degree over the reals}. We note that applying the same modification to the corresponding algorithms and analysis, gives the same improvement to the approximate, and the discrete low-degree testers given by \cite{ABFKY23} for the respective settings of approximate testing (\cite[Theorem 1.2]{ABFKY23}) and testing over rational lattices (\cite[Theorem 1.3]{ABFKY23}). We state the corresponding improved results in \autoref{sec:low-degree over the reals}, and omit their proofs because they are too similar to the proof of \autoref{thm:exact-low-degree}.

\paragraph{Organization.} We next introduce some basic notation. \autoref{sec:online_manipulations} is devoted to the online model, wherein we provide preliminaries for online testing, as well as prove \Cref{thm:online_erasures,thm:online_erasures_lb}. We prove \autoref{thm:additivity-and-linearity-real-tester} in \autoref{subsec:real linearity} and \autoref{thm:exact-low-degree} in \autoref{sec:low-degree over the reals}.

\section{Testing over \texorpdfstring{$\F_2$}{F2} with online manipulations}
\label{sec:online_manipulations}
The online manipulation model was introduced by Kalmaj et al. \cite{KalemajRV22} and formally defined by Ben-Eliezer et al.\cite{ben2024property}.
For the many motivations of the model, we refer to both these papers.
We next define the model and then proceed to prove~\cref{thm:online_erasures} and~\cref{thm:online_erasures_lb}.

In the online manipulation testing model, the input is accessed via a sequence $\set{\oracle{i}}_{i\in {\mathbb N}}$ of oracles, where $\oracle{i}$  is used to answer the $i$-th query. The oracle $\oracle{1}$ gives access to the original input (e.g., when the input is a function $f$, we have $\oracle{1} \equiv f$), and subsequent oracles are objects of the same type as the input (e.g., functions with the same domain and range). Each such oracle is obtained by the adversary by modifying the previous oracle to include a growing number of erasures/corruptions as $i$ increases.
We use $\Dist(\oracle{},\oracle{}')$ for the Hamming distance between the two oracles (i.e., the number of queries for which they give different answers).
We let $t \in \R_{\geq 0}$  denote the number of \emph{erasures} (or \emph{corruptions}) \emph{per query}. 
More formally, we define:
\begin{definition}[Types of adversaries]\label{def:fixed-rate_budget-managing}
Fix $t \in \R_{\geq 0}$ to be the manipulation parameter, and let $\oracle{} = \set{\oracle{i}}_{i\in {\mathbb N}}$ be a sequence of oracles\footnote{All our results regrading the model only consider a finite prefix of this sequence.}, where $\oracle{1}$ is an oracle accessing the original input function $f$.
We have the following attributes of adversaries, each imposing a condition on the oracle sequence:
    \begin{itemize}
        \item A \emph{fixed rate} adversary must choose entries to manipulate immediately after a query is answered. Formally, for all $i\in\N$ it holds that
         \[
            \Dist(\oracle{i}, \oracle{i+1}) \leq \floor{(i+1)\cdot t} - \floor{i\cdot t}.
         \]
    
        \item A \emph{budget-managing} adversary can reserve its ``manipulation rights'' for later rounds, but it never exceeds $t$ manipulations per each query made. Formally, for all $i\in\N$ it holds that 
         \[
            \Dist(\oracle{1}, \oracle{i+1}) \leq i\cdot t.
         \]
    
        \item An \emph{erasure} adversary can only change original values of $f$ to a new special symbol $\bot$. Formally, for all $i\in\N$ it holds that
        $\oracle{i+1}(x) \in \set{\oracle{i}(x), \perp}$.
    
        \item A \emph{corruption} adversary can change values freely, but we assume it never uses values that aren't viable outputs, as these immediately give away that the entry was manipulated.
    \end{itemize}

\end{definition}

Each adversary uses either erasures or corruptions, and each adversary either has fixed rate or can manage its manipulation budget.
We note that budget-managing adversaries are at least as strong as ones with fixed rate, making it harder to test against them.

\begin{definition}[Online $\eps$-tester]\label{def:online_tester} Fix $\eps\in(0,1).$
    An online $\eps$-tester $\cT$ for a property $\mathcal{P}$ that works in the presence of a specified adversary (e.g., $t$-online erasure budget-managing adversary) is given access to an input function $f$ via a sequence of oracles 
    $\oracle{} =  \set{\oracle{i}}_{i\in {\mathbb N}}$
     induced by that type of adversary.
    For all adversarial strategies of the specified type,
    \begin{enumerate}
        \item if $f \in \mathcal{P}$, then $\cT$ accepts with probability at least 2/3, and
        
        \item if $f$ is $\eps$-far from $\cP,$
        then $\cT$ rejects with probability at least 2/3, 
    \end{enumerate}
    where the probability is taken over the random coins of the tester.

    If $\cT$ always accepts all functions $f\in\cP$, then it has \emph{one-sided error.} If $\mathcal{T}$ chooses its queries in advance, before observing any outputs from the oracle, then it is \emph{nonadaptive}.
\end{definition}

To ease notation, we use $\oracle{}(x)$ for the oracle's answer to query $x$ (omitting the timestamp $i$). 
If $x$ was queried multiple times, $\oracle{}(x)$ denotes the first answer given by the oracle.

\begin{Remark}
    All our results consider both erasures and corruptions.
    Our linearity tester is resilient even to the stronger budget-managing adversaries, while our impossibility results in~\cref{sec:impossibility_result} hold even with the weaker fixed-rate adversaries. 
\end{Remark}

\paragraph{The tester.}
As mentioned earlier, our tester uses one of two entirely different strategies, depending on the parameters $t, \eps$. We define the next important parameter, used to decide which strategy to employ:
\[
    m := 4\ceil{\log(t) + 10/\eps} .
\]
When $m \leq n/3$ (Case I), meaning that $\eps$ is not too small and $t$ is not too big, we can safely use the strategy of~\cite{ben2024property} which eventually uses the $k$-point tester.
Originally, handling a wide range of $t, \eps$ led to a very subtle choice of $k$ and longer analysis. Our focus on a certain regime allows us to choose $k = m$ and significantly simplify the analysis.
When $m > n/3$ (Case II), however, we revert to sample-based testing, showing this strategy easily defeats the adversary and coincides with the optimal query complexity.
We analyze each of the two cases separately.

\subsection{Case I: \texorpdfstring{$m \leq n/3$}{m < n/3}}
\label{sec:k-pts-test}
For this case we use the following primitive from~\cite{KalemajRV22}:
\begin{algorithm}
 \caption{$\blr{k}$
 }
 \label{alg:krv-linearity-test}
{
    \Given {Even integer parameter $k\geq 2$ and query access to a function $f:\{0,1\}^n\to\{-1, 1\}$ 
    }
     Query $k$ points $x_1,\dots, x_k \in \cube{n}$ chosen uniformly at random (with replacement).\\
     Query point $y = \bxor_{i\in[k]} x_i$.\\
     {\bf Accept} if $f(y) = \prod_{i\in[k]} f(x_i)$ (equivalently, if $f(y) \cdot \prod_{i\in[k]} f(x_i) = 1$); otherwise, {\bf reject}.
    }
\end{algorithm}  

An improved soundness guarantee for this test was shown by~\cite{ben2024property}:
\begin{lemma}[Soundness]     \label{lem:krv_k_test}
    If $f$ is $\eps$-far from  linear, and $k\geq 2$ is even, then
    \begin{align*}
        \Prob{}{\blr{k}(f)  \text{\em\ rejects}}
        \geq \frac{1 - (1-2\eps)^{k-1}}{2} 
        \geq \min\Big\{\frac 1 4 \, ,\, \frac{k \eps}{2}\Big\}.
    \end{align*}   
\end{lemma}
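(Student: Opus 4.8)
The plan is to analyze the test by passing to the Fourier/character language over $\F_2$, identifying the range with $\{-1,1\}$ as in \autoref{alg:krv-linearity-test}. Write $\delta := \dist(f, \text{linear})$, so $\delta \geq \eps$. For each character $\chi_S(x) = (-1)^{\ip{S}{x}}$, the correlation is $\fc{f}{S} = \Expc{x}{f(x)\chi_S(x)}$, and $\max_S \fc{f}{S} = 1 - 2\delta \leq 1 - 2\eps$ since the closest linear function corresponds to the largest Fourier coefficient and agrees with $f$ on a $(1-\delta)$-fraction of inputs. The key identity I would establish first is that the acceptance probability of $\blr{k}$ equals $\frac12\left(1 + \Expc{x_1,\dots,x_k}{f(x_1)\cdots f(x_k) f(x_1\xor\cdots\xor x_k)}\right)$, because the test accepts iff the product $f(y)\prod_i f(x_i)$ equals $1$, and for a $\pm1$-valued random variable $Z$ we have $\Pr[Z=1] = \tfrac12(1+\E[Z])$.

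Next I would expand each of the $k+1$ function evaluations in the Fourier basis: $f(z) = \sum_S \fc{f}{S}\chi_S(z)$. Substituting and using independence of $x_1,\dots,x_k$ together with the multiplicativity $\chi_S(x_1\xor\cdots\xor x_k) = \prod_i \chi_S(x_i)$, the expectation over the $x_i$ factors across coordinates, and orthogonality of characters forces all $k+1$ index sets to coincide. This collapses the sum to $\sum_S \fc{f}{S}^{k+1}$. Hence
\begin{align*}
    \Prob{}{\blr{k}(f) \text{ rejects}} = \frac{1 - \sum_S \fc{f}{S}^{k+1}}{2}.
\end{align*}
Since $k$ is even, $k+1$ is odd, so $\sum_S \fc{f}{S}^{k+1} \leq \max_S \fc{f}{S} \cdot \sum_S \fc{f}{S}^{k} = \max_S\fc{f}{S} \cdot \sum_S \fc{f}{S}^2 \leq \max_S \fc{f}{S} \leq 1 - 2\eps$, using $\fc{f}{S}^{k+1} \leq (\max_S \fc{f}{S})\cdot \fc{f}{S}^k$ (valid because odd powers preserve sign and $\fc{f}{S}^k \geq 0$) and Parseval $\sum_S \fc{f}{S}^2 = 1$. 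Actually I should be slightly careful: I want $1-(1-2\eps)^{k-1}$ in the bound, not $1-(1-2\eps)$, so instead I pull out two factors: $\fc{f}{S}^{k+1} \leq (\max_S\fc{f}{S})^{k-1}\fc{f}{S}^2$ whenever... hmm, this needs $|\fc{f}{S}| \leq \max_S \fc{f}{S}$, which holds for the positive part but not the negative part. The cleaner route, which I expect to be the main technical point, is to note $\fc{f}{S}^{k+1} \le |\fc{f}{S}|^{k+1} \le (\max_T |\fc{f}{T}|)^{k-1}\fc{f}{S}^2$ and then bound $\max_T|\fc{f}{T}|$ — but the closest-linear-function fact only controls $\max_T \fc{f}{T}$, i.e. the signed maximum, which equals $1-2\delta$; handling the most negative coefficient requires observing $-\fc{f}{S} = \fc{-f}{S}$ and that $-f$ is also $\delta'$-far from linear with $\delta' \ge \eps$ only if... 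This is exactly the subtlety, so I would argue instead via: $\sum_S \fc{f}{S}^{k+1} \le (\max_S \fc{f}{S})^{k-1}\sum_S \fc{f}{S}^2$ using that $\fc{f}{S}^{k-1}\le (\max_S\fc{f}{S})^{k-1}$ for \emph{all} $S$ since $k-1$ is odd and $\fc{f}{S} \le \max_S \fc{f}{S}$ for every $S$ (the map $u\mapsto u^{k-1}$ is monotone on $\R$), while $\fc{f}{S}^2 \ge 0$. This gives $\sum_S \fc{f}{S}^{k+1} \le (1-2\eps)^{k-1}$ cleanly.

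Finally I would convert to the stated min-form: $\frac{1-(1-2\eps)^{k-1}}{2}$. If $(k-1)\eps \ge 1/2$ — more precisely when $(1-2\eps)^{k-1}$ is at most $1/2$ — the bound is at least $1/4$. Otherwise, for small $(k-1)\eps$, I would use the elementary inequality $1 - (1-2\eps)^{k-1} \ge$ (something like) $(k-1)\eps \ge \tfrac{k\eps}{2}$ since $k \ge 2$ implies $k-1 \ge k/2$; concretely, convexity of $u \mapsto (1-2\eps)^u$ or the bound $(1-x)^m \le 1 - mx/2$ valid for $mx \le 1$ (with $x = 2\eps$, $m = k-1$) yields $1-(1-2\eps)^{k-1} \ge (k-1)\eps \ge k\eps/2$, and dividing by $2$ gives the claimed $\tfrac{k\eps}{2}$ after adjusting the constant; combining the two regimes gives $\min\{1/4, k\eps/2\}$. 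I expect the only real care needed is in the two places flagged above: justifying that only equal index sets survive orthogonality (routine but worth stating), and the sign issue in bounding $\sum_S \fc{f}{S}^{k+1}$ by $(1-2\eps)^{k-1}$, which the parity of $k$ is precisely designed to handle.
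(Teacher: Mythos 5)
This lemma is imported by the paper from~\cite{ben2024property} without an in-paper proof, so there is nothing internal to compare against; your Fourier route is the standard (and, to my knowledge, the cited) one. The core of your argument is correct and complete: the identity $\Prob{}{\blr{k}(f)\text{ rejects}} = \tfrac12\bigl(1-\sum_S \fc{f}{S}^{k+1}\bigr)$ follows exactly as you say, the relation $\max_S \fc{f}{S} = 1-2\delta \le 1-2\eps$ is the right translation of distance to linearity, and your resolution of the sign issue is the right one: since $k-1$ is odd, $u\mapsto u^{k-1}$ is monotone on all of $\R$, so $\fc{f}{S}^{k-1}\le \bigl(\max_T \fc{f}{T}\bigr)^{k-1}$ for every $S$ (no absolute values needed), and multiplying by $\fc{f}{S}^2\ge 0$ and summing with Parseval gives $\sum_S \fc{f}{S}^{k+1}\le (1-2\eps)^{k-1}$. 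This is precisely where the evenness of $k$ is used, and you handled it correctly after flagging the pitfall yourself.

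The one genuine (if small) gap is in the last step. As written, your chain gives $1-(1-2\eps)^{k-1}\ge (k-1)\eps \ge k\eps/2$ and hence, after dividing by $2$, a rejection probability of only $k\eps/4$ in the small-$\eps$ regime; ``adjusting the constant'' does not recover the stated $k\eps/2$, so the second inequality of the lemma is not actually derived. It is true, but needs a slightly sharper elementary bound: when $k\eps\le 1/2$ (the regime where $k\eps/2$ is the minimum), use concavity of $\phi(x)=1-(1-x)^{k-1}$ with $\phi(0)=0$, which gives $\phi(x)\ge kx\,\phi(1/k)$ for $x\le 1/k$, together with $\bigl(1-\tfrac1k\bigr)^{k-1}\le \tfrac12$ for all $k\ge 2$ (equivalently $(1+\tfrac{1}{k-1})^{k-1}\ge 2$ by Bernoulli), to get $1-(1-2\eps)^{k-1}\ge 2k\eps\cdot\tfrac12=k\eps$, i.e.\ rejection probability at least $k\eps/2$; in the complementary regime $(1-2\eps)^{k-1}\le 1/2$ your $1/4$ bound stands. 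Note that the weaker constant $k\eps/4$ would in fact suffice for every use of the lemma in this paper (the proof of Case~I only needs the minimum to be $1/4$ when $k\ge 1/\eps$), but to prove the lemma as stated you need the fix above.
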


The algorithm is as follows:

\begin{algorithm}
 \caption{Online-Erasure-Resilient Linearity Tester}\label{alg:linearity-main}
{
    \Given {Parameters $\eps\in(0,1/2],t\in\N$; query access to $f$ via $t$-erasure oracle sequence $\oracle{}$ 
    }
     \RepTimes{$6$}  {\label{step:linearity-repeat}
         Sample $X =(x_1,\dots,x_{m})\in(\cube{n})^m$ uniformly 
         at random.\\
         Query $f$ at points $x_1,\dots, x_m$.\\
         Query $f$ at point $y = \xor_{j\in S} x_j$, where $S$ is a uniformly random subset of $[m]$ of size $\frac m2$.\\ 
         \If{$\oracle{}(y) \cdot \Pi_{j\in S} \oracle{}\left(x_j\right) = -1$}{\textbf{Reject}
         \Comment{This implies no erasures in this iteration.}
         }
        
    }
        \textbf{Accept}
  }  
 \end{algorithm}

Algorithm~\ref{alg:linearity-main} 
performs a simplified version of the tester of~\cite{ben2024property}.
We upper bound the probability of seeing an erasure at any given iteration, simplifying the analysis of~\cite{ben2024property} for this restricted regime (in particular $\eps \geq \Omega(1/n)$ and $t \leq 2^{O(n)}$).

\begin{lemma}
    \label{lem:krv_case_no_erasures}
    If $m \leq n/3$, the probability that one specific iteration of the loop in Line~\ref{step:linearity-repeat} of Algorithm~\ref{alg:linearity-main} queries an erased point is at most $3/64$. 
\end{lemma}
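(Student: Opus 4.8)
The goal is to bound the probability that any of the $m+1$ points queried in a single iteration — the $m$ random points $x_1,\dots,x_m$ and the derived point $y=\bigoplus_{j\in S}x_j$ — happens to be an erased (or corrupted) entry. At the start of this iteration the adversary has made at most $6(m+1)$ queries in total across all prior iterations (really $(m+1)$ per iteration, and at most $6$ iterations), so the budget-managing adversary has erased at most $6(m+1)\cdot t$ entries of $f$ so far. The plan is to handle the two kinds of queried points separately: the $x_i$'s, which are each marginally uniform on $\{0,1\}^n$, and the point $y$, which is \emph{not} marginally uniform because its weight-$m/2$ structure over a random half of a fixed tuple makes it somewhat predictable — this is exactly the ``too predictable'' phenomenon flagged in the introduction, and it is the main obstacle.

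For the $x_i$'s this is routine: each $x_i$ is uniform on $\{0,1\}^n$ independent of the adversary's choices up to this iteration, so the probability that a \emph{fixed} $x_i$ lands in the erased set is at most $\frac{6(m+1)t}{2^n}$, and a union bound over the $m$ choices gives at most $\frac{6m(m+1)t}{2^n}$. Using $m\le n/3$ and the hypothesis $t\le c\min\{\eps^2,1/n^2\}2^n$, in particular $t\le c\,2^n/n^2$, this is $O(c)$, which can be made smaller than, say, $1/64$ by taking $c$ small enough.

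The point $y$ is the delicate part. Even conditioned on the adversary's strategy, $y$ is distributed as the XOR of a uniformly random size-$(m/2)$ subset of $x_1,\dots,x_m$. I would argue that for \emph{any fixed target} $z\in\{0,1\}^n$, the probability that $y=z$ is small — i.e., that $y$ has small collision/pointwise mass. The clean way: condition on $x_1,\dots,x_{m-1}$ being arbitrary and on the subset $S$; then $y$ is a fixed vector XOR possibly $x_m$ (depending on whether $m\in S$), and since $m\in S$ with probability exactly $1/2$ (a uniform size-$m/2$ subset of $[m]$ contains any fixed element with probability $1/2$), with probability $1/2$ the point $y$ is $x_m\oplus(\text{fixed})$, which is uniform. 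Hence for any fixed $z$, $\Pr[y=z]\le \Pr[m\in S]\cdot 2^{-n} + \Pr[m\notin S]\cdot(\text{something})$; more carefully, one shows $\max_z\Pr[y=z]\le 2\cdot 2^{-n}$ by exploiting that at least one index of $S$ can be ``freed'' to randomize $y$. Then summing over the at most $6(m+1)t$ erased entries bounds the probability $y$ is erased by at most $\frac{12(m+1)t}{2^n}$, again $O(c)$ and hence below $1/64$ for small $c$.

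Combining: the probability some queried point is erased is at most $\frac{6m(m+1)t}{2^n}+\frac{12(m+1)t}{2^n}\le \frac{18(m+1)^2 t}{2^n}$, and since $m\le n/3$ gives $(m+1)^2\le n^2$ (up to constants) while $t\le c\,2^n/n^2$, this is at most $Cc$ for an absolute constant $C$; choosing $c$ small enough makes it at most $3/64$, as claimed. The one subtlety to get right in the write-up is the exact ``marginal of $y$'' claim and its constant — I'd state it as a small sub-claim: for the random $S$ of size $m/2$ in $[m]$ and random $x_1,\dots,x_m$, every $z$ satisfies $\Pr[\bigoplus_{j\in S}x_j = z]\le 2^{1-n}$, proved by fixing $S$ and one index $j_0\in S$ and observing $x_{j_0}$ alone makes the XOR uniform. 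Everything else is a union bound over the adversary's at-most-$6(m+1)t$ erased entries.
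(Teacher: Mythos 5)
There is a genuine gap, and it is exactly at the point you flagged as delicate: the query $y$. Your bound treats the set of erased entries as a fixed set of size at most $6(m+1)t$ and then sums the \emph{unconditional} pointwise mass $\max_z\Pr[y=z]\le 2^{1-n}$ over it. But the adversary is adaptive: by the time $y$ is queried it has already seen $x_1,\dots,x_m$ (and, being budget-managing, can dump its accumulated budget right then), so the erased set is a function of $X$, i.e.\ of the same randomness that makes $y$ ``uniform.'' Once you condition on $X$, the only remaining randomness in $y$ is the choice of $S$, so $y$ is supported on at most $\binom{m}{m/2}\approx 2^m/\sqrt{m}$ points, each with conditional probability roughly $1/\binom{m}{m/2}$ --- enormously larger than $2^{-n}$ when $m\le n/3$. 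The correct bound is therefore of the form $6t(m+1)/\binom{m}{m/2}$ (after first ruling out collisions among the $y_T$ for distinct subsets $T$, which is needed because collisions would concentrate the conditional distribution; the paper isolates this as a separate bad event $B_2$ with probability at most $2^{2m}/2^n\le 2^{-m}$). Your sub-claim about the marginal of $y$ is true but is not the relevant quantity, precisely because the erasures are not independent of $X$.

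A second, related problem is which hypothesis on $t$ you invoke. You rely only on $t\le c\min\{\eps^2,1/n^2\}2^n$ and ``choose $c$ small.'' That suffices for the $x_i$'s, but it cannot rescue the $y$ step: with $m$ possibly much smaller than $n$, that hypothesis still allows $t$ to exceed $\binom{m}{m/2}$, in which case the adversary can erase \emph{every} candidate value $\xor_{j\in T}x_j$ before $y$ is queried and the event in the lemma has probability $1$. What actually makes $6t(m+1)/\binom{m}{m/2}$ small is the relation between $t$ and $m$ built into the definition $m=4\lceil\log t+10/\eps\rceil$, namely $t\le 2^{m/4-10}$ (using $\eps\le 1/2$); this, together with $\binom{m}{m/2}\ge 2^m/\sqrt{2m}$, is how the paper reaches the absolute constant $3/64$ without any tuning of $c$. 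So to repair your proof you need to (i) condition on $X$ and handle the collision event, and (ii) replace the $t\le c\,2^n/n^2$ bound by the definitional bound $t\le 2^{m/4-10}$ in the $y$ estimate.
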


\begin{proof}
    If $m \leq n/3$, the algorithm enters the loop and queries a total of $6(m+1)$ points, which induce at most $6t(m+1)$ manipulations through the entire execution.
    We define three bad events and give upper bounds on their probabilities.
        \paragraph{An erasure while querying $X$.}
        Let $B_1$ be the event that $\oracle{}(x_j) = \perp$ for some point $x_j$ sampled in this iteration, where $j\in[m]$. Each point $x_j$ is sampled uniformly from $\cube{n}$, so the probability it is erased is at most $6t(m+1)/2^n$. By a union bound over all $m$ points, using $m \leq n$, we have 
        \begin{equation}
            \label{eq:bound_b1}
            \Prob{X}{B_1}\leq \frac{6t(m+1)m}{2^{n}} 
            \leq \frac{6t(m+1)m}{2^{m}}.
        \end{equation}
       
        \paragraph{$X$ induces a bad distribution of $y$ points.}
        For any choice $S$ denote $y_S = \xor_{j\in S} x_j$, and let $B_2$ be the event that, in this iteration, there exist two different choices of $S$ leading to the same choice $y \in \cube{n}$. 
        For any two distinct sets $T_1,T_2\subset [m]$ of size $m/2$ w.l.o.g.\ there exists an element $\ell\in T_1 \setminus T_2$. Fix all entries in $X$ besides $x_\ell$. The value of $y_{_{T_2}}$ is now fixed, but over the random choice of $x_\ell\in\cube{n}$, the vector $y_{_{T_1}}$ is uniform over $\cube{n}$. Thus, $\Pr_{x_{\ell}}{[y_{_{T_1}} = y_{_{T_2}}]}=2^{-n}$ and, consequently,
        \[
            \Prob{X}{y_{_{T_1}} = y_{_{T_2}}}
            = \Expc{}{\Prob{x_{\ell}}{y_{_{T_1}} = y_{_{T_2}}}}
            =\Expc{}{2^{-n}}
            = 2^{-n},
        \]
        where both expectations are over all entries in $X$ besides $x_\ell$, which are drawn independently from $x_{\ell}$.
        We use a union bound over all pairs of subsets $T_1$ and $T_2$, and the fact $m \leq n/3$ to get 
        \begin{equation}
        \label{eq:bound_b2}
            \Prob{}{B_2} = \Prob{X}{\exists T_1\neq T_2 \text{\rm\ such that\ } y_{_{T_1}} = y_{_{T_2}}}
            \leq \frac{2^{2m}}{2^n}\leq 2^{-m} .
        \end{equation}
        
        \paragraph{An erasure on query $y$.}
        Let $B_3$ be the event that $\oracle{}(y) = \perp$ for the point $y$ queried in this iteration.
        The adversary knows $X$ before $y$ is queried, but there are plenty of choices for $y$.
        Conditioned on $\overline{B_2}$, the distribution of $y$ is uniform over $\binom{m}{m/2}$ different choices. 
        We use $\binom{m}{m/2} \geq 2^m/\sqrt{2m}$, to obtain
    \begin{equation}
        \label{eq:bound_b3}
            \cProb{}{B_3}{\overline{B_2}}
            \leq \frac{6t(m+1)}{\binom{m}{m/2}}
            \leq \frac{6t(m+1)\sqrt{2m}}{2^m}.
    \end{equation}
     
    In terms of the bad events, we wish to bound $\Pr[B_1\cup B_3]$.
    By using a union bound over $B_1$ and $B_3$ and then the law of total probability to compute $\Pr[B_3]$, we get
       \begin{align*}
           \Prob{}{B_1\cup B_3}
           &\leq \Prob{}{B_1} + \Prob{}{\overline{B_2}} \cdot \cProb{}{B_3}{\overline{B_2}} + \Prob{}{B_2} \cdot \cProb{}{B_3}{B_2} \\
           &\leq \Prob{}{B_1}+\cProb{}{B_3}{\overline{B_2}} + \Prob{}{B_2} .
       \end{align*}
       We next combine the bounds from \eqref{eq:bound_b1}, \eqref{eq:bound_b2} and \eqref{eq:bound_b3}, and use $t \leq 2^{\frac{m}{4} - 10}$ (which is implied by $\eps \leq 1/2$).
       \begin{align*}
           \Prob{}{B_1\cup B_3}
           \leq \frac{16t(m+1)m}{2^m}
           \leq \frac{(m+1)m}{64 \cdot 2^{\frac{3m}{4}}}
           \leq \frac{3}{64} ,
       \end{align*}
       where the last transition holds since $\frac{(m+1)m}{ 2^{\frac{3m}{4}}} \leq 3$ for all positive values of $m$.
\end{proof}

We are ready to prove the correctness of \autoref{alg:linearity-main}, showing \autoref{thm:online_erasures} for case I.
\begin{proof}
    Algorithm~\ref{alg:linearity-main} makes $6(m+1) = O(\log t + 1/\eps)$ queries.
        
    We next analyze the algorithm in the presence of erasures. It is easy to see the algorithm always accepts all linear functions.  
    Now, fix an adversarial (budget-managing) strategy and suppose that the input function is $\eps$-far from linear. By \Cref{lem:krv_k_test} and since $k = m/2 \geq 1/\eps$ is even, the probability that one iteration of the loop in Step~\ref{step:linearity-repeat} samples a witness of nonlinearity is at least $\min\set{1/4\ ,\  k\eps/2} \geq \min\set{1/4\ ,\  1/2} = 1/4$.

    By \Cref{lem:krv_case_no_erasures}, the probability that an erasure is seen in a specific iteration is at most $1/16$.
    By a union bound, the probability of a single iteration seeing an erasure or not selecting a witness of nonlinearity is at most $1 - \frac{1}{4} + \frac{1}{16} = 1 - \frac{3}{16}$.
    Algorithm~\ref{alg:linearity-main} errs only if this occurs in all iterations. By independence of random choices in different iterations, the failure probability is at most
    \[
        \Big(1-\frac{3}{16}\Big)^6
        \leq e^{-\frac{18}{16}}
        \leq \frac{1}{3} ,
    \]
    where we used $1-x\leq e^{-x}$ for all $x$.

    Finally, we show that Algorithm~\ref{alg:linearity-main} has two-sided error at most $1/3$ in the presence of corruptions.
    The soundness holds with the same analysis, as finding a single violation suffices for this case.
    For completeness, note that the algorithm can only err if it has seen a manipulation, and the probability of seeing a manipulation at any iteration is at most $3/64$ by \Cref{lem:krv_case_no_erasures}. Using a union bound over all $6$ iterations, the overall probability of seeing any manipulated entry during the entire execution is at most $6 \cdot (3/64) \leq 1/3$. 
    \renewcommand{\qedsymbol}
   {\ensuremath{\square}(Case I)}
\end{proof}

\subsection{Case II: \texorpdfstring{$m > n/3$}{m > n/3}}
\label{sec:sample_based}

In this case, we use \cite[Theorem 5.1]{goldreich2016sample} for the domain $\cube{n}$ and run an algorithm that only uses random samples, their algorithm is simple, and we bring it here for completeness: 

\begin{algorithm}
 \caption{The Goldreich-Ron sample-based tester}\label{alg:linearity-GR sampled based}
{
    \Given {Parameters $\eps\in(0,1/2]$; sample access to $(x,f(x))$}
    $m=O(n)$.\\
    Sample $X =(x_1,\dots,x_{m})\in(\cube{n})^m$ uniformly 
         at random.\\
       \textbf{Accept} if $span(X)\neq \cube{n}$. \Comment{w.h.p. this doesn't happen;}\\ 
       Let $Y\subset X$ be an arbitrary basis for $\cube{n}$ and $g:\cube{n}\to\cube{}$ the unique linear function that agrees with $f$ on all points in $Y$.\\
      \For{$O(1/\varepsilon)$ times}{
      Sample $\bm z \sim \cube{n}$.\\
     \textbf{Reject} if $f(z) \neq g(z)$.
      }
      \textbf{Accept}.

  }  
 \end{algorithm}

\begin{lemma}[\protect{\cite[Theorem 5.1 (1)]{goldreich2016sample}}, for Boolean functions]
There exists a one-sided error sample-based tester for linearity over $\F_2$, with sample complexity $O(1/\eps+n)$.
\end{lemma}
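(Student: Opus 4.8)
The plan is to prove the two-part statement of \cite[Theorem 5.1]{goldreich2016sample} specialized to Boolean functions: a one-sided error sample-based tester for linearity over $\F_2$ with sample complexity $O(1/\eps + n)$. I would analyze \autoref{alg:linearity-GR sampled based} directly. The key structural fact is that a linear function $f\colon\cube{n}\to\cube{}$ is completely determined by its values on a basis of $\F_2^n$, so if we draw $m = O(n)$ uniform samples $x_1,\dots,x_m$, with high probability they span $\cube{n}$, we extract a basis $Y\subseteq X$, and we let $g$ be the unique linear function agreeing with $f$ on $Y$. Completeness is immediate: if $f$ is linear, then $g\equiv f$, so every comparison point $\bm z$ satisfies $f(\bm z) = g(\bm z)$ and the tester always accepts --- this gives the one-sided error guarantee, and crucially it holds even if $X$ fails to span (the tester accepts in that branch too).

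For soundness, suppose $f$ is $\eps$-far from every linear function. Condition on the (high-probability) event that $X$ spans $\cube{n}$, so $g$ is well-defined and linear. Since $f$ is $\eps$-far from linear and $g$ is linear, we have $\Pr_{\bm z\sim\cube{n}}[f(\bm z)\neq g(\bm z)] \geq \eps$. Hence each of the $O(1/\eps)$ comparison rounds independently rejects with probability at least $\eps$, so the probability that none of them rejects is at most $(1-\eps)^{O(1/\eps)}$, which is a small constant. Combining this with the small probability that $X$ fails to span gives overall rejection probability at least $2/3$ after choosing the constants appropriately.

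The first step in the ordering is to bound $\Pr[\mathrm{span}(X) \neq \cube{n}]$ when $m = cn$ for a suitable constant $c$. The standard argument: a uniformly random set of $m$ vectors fails to span iff it lies in some hyperplane (more precisely, one bounds, for each $i$, the chance that $x_i$ lies in the span of $x_1,\dots,x_{i-1}$ when that span is a proper subspace); summing a geometric-type series shows that $\Pr[\mathrm{span}(X)\neq\cube{n}] \le \sum_{j\ge 0} 2^{-(m-n-j)} = 2^{-(m-n)+1}$ or similar, which is at most a small constant (say $1/10$) once $m \ge n + O(1)$; taking $m = O(n)$ with room to spare is more than enough. Then I would assemble completeness, the spanning bound, and the amplification bound into the final $2/3$ guarantee, and note the total sample complexity is $m + O(1/\eps) = O(n + 1/\eps)$.

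The main obstacle --- though it is a mild one --- is making the soundness argument clean about the order of quantifiers: the function $f$ is fixed first (it is $\eps$-far from linear), and only then is $X$ drawn and $g$ derived from it, so $g$ is a random linear function depending on $X$; but regardless of \emph{which} linear function $g$ turns out to be, the distance $\Pr_{\bm z}[f(\bm z)\neq g(\bm z)] \ge \delta(f,\text{linear}) > \eps$ holds deterministically because $g$ is linear. So one can first condition on $X$ spanning, then treat $g$ as an arbitrary-but-fixed linear function for the purposes of the comparison rounds, and the bound goes through. A secondary point worth stating carefully is that the comparison samples $\bm z$ are fresh uniform samples independent of $X$ (equivalently, one may reuse part of the sample budget, but using fresh samples keeps independence transparent).
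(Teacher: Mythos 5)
Your proposal is correct and follows essentially the same route the paper intends: it cites \cite[Theorem 5.1]{goldreich2016sample} and sketches exactly this argument (use $O(n)$ samples to span $\F_2^n$ and extrapolate a linear $g$, accept on failure to span to preserve one-sidedness, then compare $f$ and $g$ on $O(1/\eps)$ fresh uniform samples, noting that whatever linear $g$ results, $\Pr_{\bm z}[f(\bm z)\neq g(\bm z)]\geq\eps$ when $f$ is $\eps$-far from linear). The only nit is the stated geometric sum for the spanning failure probability, whose exponent is written with the wrong sign on $j$; the intended bound $\Pr[\mathrm{span}(X)\neq\cube{n}]\leq 2^{-(m-n)+1}$ (or simply $2^{n-m}$ via a union bound over hyperplanes) is correct and suffices.
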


Next, we show that if the algorithm uses only uniform random samples, then the adversary loses its power, in the sense that we, most likely, won't see any manipulation. 
\begin{theorem}
    Let $\cal{T}$ be a sample-based tester (for some property $\cal{P}$) with input length $N$ and distance parameter $\eps$ that uses $q$ uniformly random samples and succeeds with probability $1-\delta$. Then the same tester with the same number of queries succeeds with probability $1-2\delta$ in the presence of $t$-online-manipulations budget-managing adversary for any $t\leq \delta\cdot N/q^2$.  
\end{theorem}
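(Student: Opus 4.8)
The plan is to show that a sample-based tester using $q$ uniform samples is, with high probability, entirely unaffected by the adversary, because the adversary simply cannot predict where the samples will land. First I would set up the coupling: run the tester $\cT$ in the online-manipulation model, and simultaneously imagine running it in the clean model (no manipulations) with the \emph{same} internal randomness, so that the $q$ sample points $x_1,\dots,x_q$ drawn i.i.d.\ uniformly from the domain of size $N$ are identical in both executions. The two executions produce the same accept/reject decision unless at some step $i$ the oracle $\oracle{i}$ disagrees with the original input $f$ on the queried point $x_i$ — that is, unless the tester ever samples a point that has been manipulated by the time it is queried.

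The key step is to bound the probability of ever hitting a manipulated point. Since the adversary is budget-managing, after $i-1$ queries have been answered at most $(i-1)t$ entries have been manipulated, so before the $i$-th sample is drawn the manipulated set $M_i$ has size at most $(i-1)t \le qt$. Crucially, $M_i$ is determined by the adversary's strategy and the oracle's past answers, but $x_i$ is drawn uniformly and independently of all of this (this is exactly where ``sample-based'' matters: the tester does not choose query points, so the adversary has no leverage on where the next sample falls). Hence
\[
    \Prob{}{x_i \in M_i} \le \frac{|M_i|}{N} \le \frac{qt}{N}.
\]
A union bound over the $q$ samples gives that the probability of ever hitting a manipulated entry is at most $q \cdot \frac{qt}{N} = \frac{q^2 t}{N}$. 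Plugging in the hypothesis $t \le \delta N / q^2$ yields that this ``bad event'' $E$ has $\Prob{}{E} \le \delta$.

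Finally I would combine the two facts. Conditioned on $\overline{E}$, the online execution is identical to a clean execution, which errs with probability at most $\delta$; and $\Prob{}{E} \le \delta$; so by a union bound the online tester errs with probability at most $2\delta$, i.e.\ succeeds with probability at least $1 - 2\delta$. One subtlety worth spelling out: I should make sure that ``the original input $f$ either has $\cP$ or is $\eps$-far from $\cP$'' is the relevant promise in the online model (it is, by \Cref{def:online_tester}), so that conditioning on $\overline{E}$ really does reduce to the clean guarantee of $\cT$ on $f$ itself. The main (only, really) obstacle is stating the independence cleanly: one must argue that at the moment $x_i$ is sampled, the manipulated set is a random variable measurable with respect to the adversary's coins and the transcript so far, all of which are independent of the fresh sample $x_i$ — so conditioning on the entire history, $x_i$ is still uniform, and the elementary bound $|M_i|/N$ applies. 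Everything else is a short union bound.
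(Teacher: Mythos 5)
Your proposal is correct and follows essentially the same argument as the paper: each of the $q$ uniform samples hits one of the at most $qt$ manipulated points with probability at most $qt/N$, a union bound gives $q^2t/N \le \delta$ for ever seeing a manipulation, and a final union bound with the clean tester's error yields $2\delta$. Your added care about the coupling and the measurability of the manipulated set is a more explicit rendering of the same reasoning, not a different route.
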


\begin{proof}
   To analyze the tester, we note that all queries in this test are random samples, and the total number of manipulations made is $qt$. Therefore, each sample has probability of at most $qt/N$ to be a previously-manipulated point.
   By a union bound over all the samples taken, the overall probability to sample any previously-manipulated point is at most $q^2 t /N\leq \delta$.
   Applying a union bound for this event and the event that $\cT$ errs leaves a total error of at most $2\delta$., completing the proof.
\end{proof}
\begin{Remark}
If $\cal{T}$ is a one-sided error tester, we can keep it one-sided  against online-erasure adversary by accepting whenever we see an erasure. Otherwise, we get a two-sided error tester.  
\end{Remark}
\begin{corollary}
There is a one-sided error sample-based tester for testing linearity in the presence of online-manipulation budget-managing adversaries. The tester uses $q=O(1/\eps+n)$ queries, succeeds with probability $2/3$ and works for all $\eps\in (0,1/2)$ and $ t\leq O(\min\set{2^n/n^2, \eps^2 2^n})$.
\end{corollary}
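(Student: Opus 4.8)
The plan is to instantiate the generic reduction proved just above (the \textbf{Theorem} turning any sample-based tester into an online-resilient one) with the Goldreich--Ron sample-based linearity tester as the black box $\cT$, and then check that the resulting hypothesis on $t$ unpacks into the stated bound $O(\min\set{2^n/n^2,\eps^2 2^n})$.

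First I would fix the base tester. Take the one-sided error sample-based tester for linearity over $\F_2$ from the lemma above (the Goldreich--Ron algorithm), which reads $q_0=O(1/\eps+n)$ \emph{uniform} samples $(x,f(x))$, never rejects a linear function, and has soundness error at most $1/3$; this is exactly the form required by the reduction, since every query it makes is a uniform sample. To drive the error low enough, run $O(1)$ independent copies and reject iff some copy rejects. This preserves one-sidedness and brings the soundness error below $\delta:=1/6$, still with $q=O(1/\eps+n)$ uniform samples in total. Now apply the preceding Theorem with $N=2^n$, this $q$, and this $\delta$: provided $t\le \delta N/q^2$, the very same tester succeeds with probability at least $1-2\delta=2/3$ against a $t$-online budget-managing adversary.

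It then remains to translate $t\le \delta N/q^2$. Writing $q\le c(1/\eps+n)$ for an absolute constant $c$, we have $q^2\le 2c^2(1/\eps^2+n^2)$, so
\[
\frac{\delta N}{q^2}\ \ge\ \frac{\delta}{2c^2}\cdot\frac{2^n}{1/\eps^2+n^2}\ \ge\ \frac{\delta}{4c^2}\,\min\set{\eps^2 2^n,\ 2^n/n^2},
\]
using $1/(a+b)\ge\tfrac12\min(1/a,1/b)$. Hence the hypothesis holds for every $t\le O(\min\set{2^n/n^2,\eps^2 2^n})$, as claimed. Finally, for one-sidedness against an erasure adversary I would invoke the Remark above and have the tester accept the moment it reads a $\bot$: a linear $f$ is then never rejected (either a queried point was erased, in which case we accept, or no queried point was erased, in which case the base tester's completeness applies verbatim), while soundness is unchanged because ``reading a $\bot$'' is a special case of ``reading a manipulated entry'', whose probability is already inside the $2\delta$ budget; against a corruption adversary only two-sided error is obtained, matching the statement.

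There is no real obstacle here — the work is bookkeeping: choosing $\delta$ a small enough constant that $1-2\delta\ge 2/3$ yet large enough that $O(1)$ repetitions of Goldreich--Ron suffice (keeping $q=O(1/\eps+n)$), and observing that $q^2=\Theta(1/\eps^2+n^2)$ is precisely what produces the two-regime bound $\min\set{\eps^2 2^n,\ 2^n/n^2}$.
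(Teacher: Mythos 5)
Your proposal is correct and follows essentially the same route the paper intends: instantiate the preceding reduction theorem with the Goldreich--Ron sample-based tester (with the constant error driven below $1/6$ so that $1-2\delta\ge 2/3$), unpack $t\le\delta N/q^2$ with $q=O(1/\eps+n)$ into $t\le O(\min\{2^n/n^2,\eps^2 2^n\})$, and use the Remark (accept on seeing $\bot$) for one-sidedness against erasures. The paper leaves these instantiation details implicit, and your bookkeeping fills them in accurately.
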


We are now ready to prove \autoref{thm:online_erasures} for Case II:
\begin{proof}
    The query complexity is $q = O(n + 1/\eps)$, and in our case, it is at most $O(m + 1/\eps) = O(\log t + 1/\eps)$.
    \renewcommand{\qedsymbol}{\ensuremath{\square}(Case II)}
\end{proof}

\subsection{A testing impossibility result}
\label{sec:impossibility_result}
In this section, we prove that linearity testing is impossible when $t$ is too large (\autoref{thm:online_erasures_lb}). In fact, we show the same is true for a large variety of properties.
The argument below is formulated for properties of functions $f: \F_q^n \to \F_q$, though it can be adjusted to other settings as well (e.g., properties of length $n$ strings over alphabet $\Sigma$).  

We first review the argument of~\cite{fischer2024basic} in the standard model (with no adversary) and modify it for our input type, establishing some notations along the way. We then turn to state and prove the new impossibility result for the online manipulation model.

\paragraph{The standard model (Fischer's argument~\cite{fischer2024basic}).}

We consider properties of functions\footnote{The original argument in~\cite{fischer2024basic} is applied to properties of $\Sigma^n$, or equivalently functions $f:[n] \to \Sigma$.}
$f:\F_q^n \to \F_q$, where a property $\mathcal{P}$ is identified with a subset of all these functions.
Fix a property $\mathcal{P}$ such that there exists two input functions $\fyes \in \mathcal{P}$ and $\fno$ that is $\alpha$-far from $\mathcal{P}$ for some constant $\alpha > 0$. We focus on proximity parameter $\eps \in (0,\alpha)$, and for simplicity, assume that $\eps = \ell / q^n$ for some integer $\ell \in \N$.

First, we consider a function $g\in\mathcal{P}$ that minimizes the distance to $\fno$ (if such $g$ is not unique, choose one arbitrarily) and
let $D=\{x\in \F_q^n: g(x)\neq \fno(x)\}$ be the set of inputs on which $g$ and $\fno$ disagree.
For any $A\subseteq D$, define
\[
    g_{_A}(x) :=
    \begin{cases}
    g(x), & \text{if $x \notin A$}\\
    \fno(x), & \text{if $x \in A$}
    \end{cases}
\]
In particular, $g_{_{D}} \equiv \fno$, and $g_{_{\emptyset}} \equiv g$. Moreover, $g_{_A}$ has distance exactly $\card{A} / q^n$ from the property $\mathcal{P}$ (since $g$ minimizes the distance of $\fno$ from $\mathcal{P}$, it must minimize the distance of $g_{_A}$ as well).

The focus from now on is on a task we call $(g, D, \ell)$-testing. An algorithm solves this task if it accepts $g$ with probability at least $2/3$, but accepts $g_{_A}$ with probability at most $1/3$ for any $A \subseteq D$ such that  $\card{A} = \ell$.
In particular, for $\eps = \ell/q^n$, an $\eps$-tester for $\mathcal{P}$ is a $(g, D, \ell)$-tester.

It was shown in~\cite{fischer2024basic} that w.l.o.g, a randomized $(g, D, \ell)$-tester is non-adaptive and only queries points in $D$. Thus, any such tester simply queries the function on $m$ points $x_1, \dots x_m$, where each $x_i$ is a random variable supported on $D$.
Furthermore, for each point $z\in D$ we can define the event $E_z$ that the point $z$ was queried during the execution (i.e., $x_i = z$ for some $i\in [m]$).

Lastly, fix a randomized $(g, D, \ell)$-tester $\cT$, and observe the probabilities for any point $z\in D$ to be queried (that is, the probability of the event $E_z$).
If the tester makes at most $m$ queries, then
\[
    \sum_{z\in D} \Prob{}{E_z} 
    = \sum_{z\in D} \Expc{}{\ind{E_z}}
    = \Expc{}{\sum_{z\in D} \ind{E_z}}
    \leq \Expc{}{m} 
    = m.
\]
The crux of the proof is that $\cT$ chooses its queries ahead, and we can focus on the points least likely to be queried in order to single out an input that it struggles to reject. 
Formally, denote by $D' = \set{z_1, \dots, z_{\ell}}$ the set with $\ell$ points that have the smallest probabilities (breaking ties arbitrarily), and observe the probability of the event $E_{D'}$ that \emph{any} point from $D'$ was queried. By a union bound, we have 
\[
    \Prob{}{E_{D'}}
    \leq \sum_{z\in D'} \Prob{}{E_z}
    \leq \frac{\ell}{\card{D}} \cdot \sum_{z\in D} \Prob{}{E_z}
    \leq \frac{\ell \cdot m}{\card{D}} .
\]
We recall that $\ell = \eps \cdot q^n$ and that $\fno$ is $\alpha$-far from $\mathcal{P}$, meaning $\card{D} \geq \alpha \cdot q^n$. Hence
\begin{equation}
    \label{eq:online_lb_crucial_event_upper}
    \Prob{}{E_{D'}}
    \leq \frac{\eps \cdot m}{\alpha} .
\end{equation}
To finish up, fix the random coins of $\cT$ and consider two executions, one when fed values from $g$ and another when fed values from $g_{D'}$. 
The output can only differ if the event $E_{D'}$ occurred (otherwise, $\cT$ sees the same values in both executions).
So over the choice of random coins we have
\begin{equation}
    \label{eq:online_lb_crucial_event_lower}
    \Prob{}{E_{D'}} \geq \Prob{}{\cT\text{ accepts }g} - \Prob{}{\cT\text{ accepts }g_{D'}} \geq \frac{2}{3} - \frac{1}{3} = \frac{1}{3} .
\end{equation}
Combining~\eqref{eq:online_lb_crucial_event_upper} and~\eqref{eq:online_lb_crucial_event_lower} we $m \geq \alpha/(3\eps)$.

\paragraph{The online manipulation model.}
For this model, we similarly assume there exists $\fno$ that is $\alpha$-far, and a function $g\in \mathcal{P}$ that minimizes distance to $\fno$ (arbitrarily chosen if not unique), where $\fno$ and $g$ differ on the set of inputs $D \subseteq \F_q^n$.
We again focus on the task of $(g, D, \ell)$-testing, and assume $\eps = \ell/q^n \in (0, \alpha)$.

For the formal statement, we denote $\mathcal{F}_n := \set{f: \F_q^n \to \F_q}$ and $\mathcal{F} = \cup_{n\in \N} \mathcal{F}_n$. A property is simply $\mathcal{P} \subseteq \mathcal{F}$, and we use $\mathcal{P}_n = \mathcal{P} \cap \mathcal{F}_n$. 
The following holds
\begin{theorem}\label{thm: fisher lb for online model}
    Fix a constant $\alpha > 0$, and a property $\mathcal{P} \subseteq \mathcal{F}$, such that for infinitely many values of $n$ there exist $\fyes^n \in \mathcal{P}_n$ and $\fno^n\in \mathcal{F}_n$ that is $\alpha$-far from $\mathcal{P}_n$.
    Fix $\eps = \ell / q^n < \alpha$ where $\ell \in \N$, and $t \geq (10/\alpha) \eps^2 q^n$.
    Then there is no $\eps$-tester for $\mathcal{P}$ that is resilient to $t$ manipulations per query.
\end{theorem}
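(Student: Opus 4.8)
The plan is to confront any purported tester with an oblivious adversary whose ``repair budget'' $t$ lets it undo the hard instance faster than the tester can probe it; the accounting is a quantitative refinement of Fischer's argument recalled above. Fix $n$ among the infinitely many for which $\fyes^n\in\mathcal{P}_n$ and an $\alpha$-far $\fno^n$ exist, write $N=q^n$, let $g\in\mathcal{P}_n$ minimize $\dist(g,\fno^n)$ with disagreement set $D$ (so $|D|\ge\alpha N$), and set $\ell=\eps N$ and $m:=\ceil{\alpha/(10\eps)}$. Suppose toward a contradiction that $\cT$ is an $\eps$-tester for $\mathcal{P}$ resilient to $t$ manipulations; then $\cT$ accepts $g$ with probability at least $2/3$, and it accepts $g_{_A}$ with probability at most $1/3$ for every $A\subseteq D$ with $|A|=\ell$ under every admissible adversary, since each such $g_{_A}$ is $\eps$-far from $\mathcal{P}$.

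Because in its run on input $g$ the tester's first $m$ queries touch at most $m$ points of $D$, we have $\sum_{z\in D}\Pr[z\text{ appears among }\cT\text{'s first }m\text{ queries}]\le m$, so the set $D'\subseteq D$ consisting of the $\ell$ points with the smallest such probabilities satisfies $\Pr[\cT\text{ queries some point of }D'\text{ within its first }m\text{ queries}]\le \ell m/|D|$. We feed $\cT$ the input $g_{_{D'}}$ and use the following oblivious adversary: fix an enumeration $z_1,\dots,z_\ell$ of $D'$, and right after the $i$-th query is answered replace $z_{(i-1)t+1},\dots,z_{it}$ by $\bot$ (erasure model) or by $g(z_j)$ (corruption model). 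This is a legal fixed-rate, hence also budget-managing, adversary, and $t\ge(10/\alpha)\eps^2N$ gives $\ell/t=\eps N/t\le\alpha/(10\eps)$, so all of $D'$ has been ``repaired'' after $\ceil{\ell/t}\le\ceil{\alpha/(10\eps)}=m$ queries, from which point on the oracle coincides with $g$ everywhere.

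Now couple this execution with the run of $\cT$ on input $g$ under the analogous schedule (a sequence of no-ops in the corruption model, the same erasures in the erasure model), using identical random coins. Since $g$ and $g_{_{D'}}$ differ only on $D'$ and the adversary touches nothing outside $D'$, the two executions issue the same queries and receive the same answers up to the first step at which $\cT$ queries a point of $D'$ that has not yet been repaired; and since every point of $D'$ is repaired within $m$ queries, such a step, if it ever occurs, is one of $\cT$'s first $m$ queries (and is a query into $D'$ in the run on $g$ as well). Hence the two executions return different verdicts with probability at most $\ell m/|D|\le(\eps/\alpha)\ceil{\alpha/(10\eps)}$, which a short computation shows is strictly less than $1/3$. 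Therefore
\[\Pr[\cT\text{ accepts }g_{_{D'}}]\ \ge\ \Pr[\cT\text{ accepts }g]-\frac{\ell m}{|D|}\ >\ \frac{2}{3}-\frac{1}{3}\ =\ \frac{1}{3},\]
contradicting the hypothesis that $\cT$ accepts the $\eps$-far input $g_{_{D'}}$ with probability at most $1/3$. Running this for infinitely many $n$ rules out any such $\cT$.

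The only non-routine point is the synchronization of two ``clocks'': the adversary's repair time $\ceil{\ell/t}$ and the tester's provable blind window $m$, the number of initial queries during which it necessarily misses the worst $\ell$-subset of $D$. The hypothesis $t\ge(10/\alpha)\eps^2N$ is calibrated exactly so that $\ceil{\ell/t}\le m$ while still keeping $\ell m/|D|<1/3$; everything else --- Fischer's averaging inequality and the union bound defining $D'$ --- is standard. A secondary subtlety, and the reason for taking the adversary oblivious, is to ensure that the two coupled executions remain bit-for-bit identical until the first ``catch'', so that the verdict can flip only on that event; were the adversary's schedule to depend on the tester's queries, both the coupling and the well-definedness of the blind-window probabilities would need extra care.
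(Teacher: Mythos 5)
Your argument is correct and essentially identical to the paper's proof: the same reduction to $(g,D,\ell)$-testing, the same choice of $D'$ as the $\ell$ points of $D$ least likely to appear among the first $\approx \alpha/(10\eps)$ queries, the same deterministic fixed-rate adversary that repairs all of $D'$ within that window using $t\geq(10/\alpha)\eps^2 q^n$, and the same coupling yielding the $2/3-1/3$ contradiction. The one caveat is a rounding issue you share with the paper: when $\eps$ is close to $\alpha$ (so $m=\ceil{\alpha/(10\eps)}=1$) your bound $(\eps/\alpha)\,m=\eps/\alpha$ need not be below $1/3$, and the paper's non-integer $m_0=\alpha/(10\eps)$ hides exactly the same issue, so this does not distinguish your proof from theirs.
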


\begin{proof}
    For infinitely many values of $n$ there exists $\alpha$-far input $\fno^n$.
    Fix such $n$ and follow the argument as before, focusing on $g\in \mathcal{P}_n$ that minimizes distance to $\fno^n$, and the set of coordinates $D$ on which they differ.
    Assume there exists a $t$-manipulation-resilient $\eps$-tester $\cT$ for $\mathcal{P}$. In particular, it solves $(g,D,\ell)$-tester as defined above even if $t$ manipulations are made per query.

    Here, we focus on the first $m_0 = \alpha/(10\eps)$ queries made by $\cT$, and apply the offline argument.
    Define for each point $z\in D$ the event $E_z$ that $z$ is queried by $\cT$ within its first $m_0$ queries.
    Next, order the points $z$ by the probabilities of $E_z$ and define the set $D' \subseteq D$ consisting of the $\ell$ points with minimal probabilities (breaking ties arbitrarily).
    The event that any of these points was queried within the first $m_0$ queries is denoted by $E_{D'}$ and similarly to~\eqref{eq:online_lb_crucial_event_upper}, it satisfies   
    \[
        \Prob{}{E_{D'}} \leq \frac{\eps \cdot m_0}{\alpha} = \frac{1}{10} .
    \]
    
    We are now ready to define the (deterministic) adversarial strategy that interferes with the tester.
    The adversary uses the first $m_0$ queries to manipulate all the coordinates of $D'$, hiding all differences between $g$ and $g_{D'}$.
    An erasure adversary will simply erase these entries, whereas a corruption adversary will fix their values to those of $g$ (no matter the input). 
    Note that even the weaker fixed-rate adversary is able to execute this strategy deterministically (using an arbitrary order on $D'$), and that all entries in $D'$ can indeed be manipulated within $m_0$ queries since
     \[
        m_0 t 
        \geq \frac{\alpha}{10\eps} \cdot \frac{10\eps^2 q^n}{\alpha}
        = \eps q^n
        = \ell
        = \card{D'} .
    \]

    Let us fix the random coins of $\cT$ and analyze two executions with the online adversary above, one on input $g$ and the other on input $g_{D'}$.
    If the two executions differ in their answer, it means at least one query was answered differently in both executions, and this can only happen in the first $m_0$ queries due to the online adversary. This implies that for these random coins, the event $E_{D'}$ holds.
    Over the random choices of $\cT$, similarly to~\eqref{eq:online_lb_crucial_event_lower}, we get
    \[
        \Prob{}{E_{D'}}
        \geq \Prob{}{\cT\text{ accepts }g} - \Prob{}{\cT\text{ accepts }g_{D'}}
        \geq \frac{1}{3} ,
    \]
    which leads to a contradiction.
    As this applies for infinitely many values of $n$, the impossibility result holds asymptotically.
\end{proof}

To prove~\Cref{thm:online_erasures_lb}, we simply apply the theorem for linearity with $q=2$ and $\alpha = 1/2$, as for any $n \geq 1$, the affine function $\fno = 1 + x_1$ has distance $1/2$ from linearity.

\section{Testing Over the Reals}\label{sec:real testing}

\subsection{Optimal Linearity Tester}\label{subsec:real linearity}

In this section, we prove \autoref{thm:additivity-and-linearity-real-tester}. Our proof consists of modifying one of the procedures of \cite{FlemingYoshida20}, namely \Call{query-$g$}{}; and proving that the modification both (i) reduces the number of queries, and (ii) maintains the correctness.
We start with a formal definition of the self-corrected function $g:\R^n\to\R$ from \cite{FlemingYoshida20} (described in \Cref{sec:intro-reals}). 
  Every $\bm p\in\R^n$ is radially contracted to a point $\bm p/\kappa_{\bm p}\in \ball(\bm 0,1/50)$, where the normalization factor is defined as
 \begin{equation*}\label{eq:contracting factor}
    \kappa_{\bm p} := \begin{cases}
    1, &\text{ if }\|\bm p\|_2 \leq 1/50\\
    50\|\bm p\|_2, &\text{ if } \|\bm p\|_2 > 1/50
\end{cases}
\end{equation*}
Then, for a direction $\bm x\in\R^n$ we define the ``opinion" of $\bm x$ on the value $g$ should have on $\bm p$ by 
\begin{equation}\label{eq:self-correction-direction-defn}
     g_{\bm x}(\bm p) :=  \kappa_{\bm p}\left(f\left(\frac{\bm p}{\kappa_{\bm p}}-\bm x\right)+f(\bm x)\right) .
\end{equation}
 Finally, let $g$ be the global corrector, aggregating opinions taken according to the standard Gaussian:
 \begin{equation}\label{eq:self-correction-defn}
    g(\bm p) := \maj_{\bm x\sim\cN(\bm 0,I)} g_{\bm x}(\bm p).
\end{equation}
We note that a priori a majority does not necessarily exist. However, we only use this definition in the analysis of the second phase of the tester. We rely on a result by~\cite{FlemingYoshida20} and only use $g$ in the analysis when it is well-defined.
Our modified version for \Call{query-$g$}{} is given in \autoref{alg:modified query g}. 
\begin{algorithm}[ht]
  \caption{Modified query access to the self-corrected function}\label{alg:modified query g}
  \Procedure{\emph{\Call{Query-$g$}{$\bm p$}}}{
    \Given{Query access to $f\colon \mathbb{R}^n \to \mathbb{R}$ and a point $\bm p\in \R^n$;}
    Sample $\bm x \sim \mathcal{N}(\bm 0,I)$\;
    \textbf{Return}  $\kappa_{\bm p} \left( f( \bm p/\kappa_{\bm p} - \bm x) + f(\bm x) \right)$; \Comment{This is $g_{\bm x}(\bm p)$ as in \eqref{eq:self-correction-direction-defn}}
    }
\end{algorithm}

 We focus on proving the first part of \Cref{thm:additivity-and-linearity-real-tester}, showing that there exists a tester for additivity. The tester for linearity follows from the same arguments as in \cite{FlemingYoshida20} where one uses our modified \Call{query-$g$}{} subroutine in both their testers\footnote{\cite{FlemingYoshida20} derived their linearity tester from their additivity tester, so one can use our modified version as a black box.}. 

\begin{algorithm}[ht]
  \caption{Distribution-free Additivity Tester}\label{alg:zero-mean-additivity}
  \Procedure{\emph{\Call{AdditivityTester}{$f,\mathcal{D},\eps$}}}{
    \Given{Query access to $f\colon \mathbb{R}^n \to \mathbb{R}$, sampling access to an unknown 
    distribution $\mathcal{D}$, proximity parameter $\eps> 0$;}
    \textbf{Reject} if \Call{TestAdditivity}{$f$} returns \textbf{Reject}\;
    \For{$N_{\ref{alg:zero-mean-additivity}} \gets O(1/\varepsilon)$ times}{
      Sample $\bm p \sim \mathcal{D}$ and $\bm x \sim \mathcal{N}(\bm 0,I)$\;
      \textbf{Reject} if $f(\bm p) \neq g_{\bm x}(\bm p)$ {\label{step:fp-vs-gxp}} \Comment{Where $g_{\bm x}(\bm p)=\kappa_{\bm p} \left( f( \bm p/\kappa_{\bm p} - \bm x) + f(\bm x) \right)$. as in \eqref{eq:self-correction-direction-defn}}
      }
    \textbf{Accept}.
    }
\end{algorithm}

\begin{algorithm}
  \caption{{\cite[Algorithm 2]{FlemingYoshida20}} Additivity Subroutine}\label{alg:subroutines_noisy}
  \Procedure{\emph{\Call{TestAdditivity}{$f$}}}
  {
    \Given{Query access to $f\colon \mathbb{R}^n \to \mathbb{R}$;}
    \For{$N_{\ref{alg:subroutines_noisy}} \gets O(1)$ times}{
      Sample $\bm{x},\bm{y},\bm{z} \sim \cN(\bm 0, I)$\;
      \textbf{Reject} if $f(-\bm x)\neq -f(\bm x)$\;
      
      \textbf{Reject} if $f(\bm x - \bm y) \neq f(\bm x) - f(\bm y)$\;
      
      \textbf{Reject} if $f\left(\frac{\bm x - \bm y}{\sqrt{2}} \right) \neq  f \left(\frac{\bm x -\bm z}{\sqrt{2}} \right) + f \left(\frac{\bm z - \bm y}{\sqrt 2} \right)$\; 
    }
    \textbf{Accept}.
  }
\end{algorithm}

In~\cite[Theorem 1]{FlemingYoshida20}, it is shown that $O\left(\frac{1}{\eps}\log \frac{1}{\eps}\right)$ queries suffice for testing linearity if $f:\R^n\to\R$ is additive. 
We simplify their algorithm, and slightly modify the analysis to show that $O(1/\eps)$ queries suffice instead. Our simplified tester is given in \autoref{alg:zero-mean-additivity}, where we incorporate our modified procedure \Call{query-$g$}{} into the algorithm.
First it runs \autoref{alg:subroutines_noisy} (\Call{TestAddativity}{}), which is exactly the same subroutine from \cite{FlemingYoshida20}. Then it proceeds similarly to their algorithm except that when we want to evaluate the self-corrected function $g(\bm p)$ (as defined in \eqref{eq:self-correction-defn}), we now use only one random direction ($\bm x\in\R^n$), evaluating $g_{\bm x}(\bm p)$ with only two queries to $f$.
    Our approach differs from \cite{FlemingYoshida20}'s at this juncture, as \cite{FlemingYoshida20} use the subroutine   \Call{Query-$g$}{}
  to evaluate $\{g_{\bm x_i}(\bm p),\bm x_i\sim\cN(\bm 0,I)\}$ in $O(\log (1/\eps))$ random directions ($\bm x_i$'s) as a consistency check, i.e., to ensure $g_{\bm x_i}(\bm p)$'s are all the same, and then output $g_{\bm x_1}(\bm p)$. They first showed that for random $\bm x$ the value of $g_{\bm x}(\bm p)$ is indeed the value of $g(\bm p)$ with probability at least $1/2$. Then, they used many queries to amplify the success probability to be close to $1$ (in particular, greater than $1-O(\eps)$).  Their idea was to get an accurate evaluation of $g$ at \emph{all} $O(1/\eps)$ queried points.
Our observation is that a correct evaluation of $g$ at each point with only a constant probability suffices, shaving the factor used to amplify it to $1-O(\eps)$. 
Intuitively, if a point $\bm p$ such that $f(\bm p)\neq g(\bm p)$ is queried, then it is enough to evaluate $g$ on this point correctly with a constant probability (say, $1/2$). Whenever such $\bm p$ exists, other incorrect evaluations do not matter, allowing us to skip the amplification and save queries.

To formally prove \cref{thm:additivity-and-linearity-real-tester}, our starting point is the following result by \cite{FlemingYoshida20}:
\begin{lemma}[{\cite[Lemma 8]{FlemingYoshida20}}]\label{lem:FY1}
    If \autoref{alg:subroutines_noisy}
        accepts with probability at least $1/10$, then $g$ is a well-defined, additive function on $\R^n$, and furthermore, for every $\bm p\in\R^n$,
    \[\Prob{\bm x\sim\cN(\bm 0,I)}{g(\bm p)\neq g_{\bm x}(\bm p)}<1/2.\]
\end{lemma}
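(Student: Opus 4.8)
The plan is to follow the self-correction template outlined in \Cref{sec:intro-reals}. First I would use the hypothesis to conclude that each atomic check of \autoref{alg:subroutines_noisy} passes with overwhelming probability; then I would show that for every $\bm p\in\R^n$ the opinions $g_{\bm x}(\bm p)$ agree across two independent directions with high probability; and finally I would read off the well-definedness of $g$, the stated $1/2$-bound, and the additivity of $g$. For the first point: since \autoref{alg:subroutines_noisy} performs a constant number of independent iterations and an iteration passes only if all three equalities hold, an acceptance probability at least $1/10$ forces the failure probability of each of the three checks, measured over fresh standard Gaussians, to be at most some constant $\delta$ that can be driven arbitrarily small by enlarging the constant iteration count. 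In particular $\eta_2:=\Prob{\bm x,\bm y}{f(\bm x-\bm y)\neq f(\bm x)-f(\bm y)}\le\delta$ for i.i.d.\ $\bm x,\bm y\sim\cN(\bm 0,I)$, and similarly for the oddness check and the three-point check.

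\textbf{The collision bound (the crux).} Fix $\bm p$, set $\bm q:=\bm p/\kappa_{\bm p}$ so that $\|\bm q\|_2\le1/50$, and recall $g_{\bm x}(\bm p)=\kappa_{\bm p}\bigl(f(\bm q-\bm x)+f(\bm x)\bigr)$. I would prove $\Prob{\bm x,\bm x'}{g_{\bm x}(\bm p)\neq g_{\bm x'}(\bm p)}=O(\sqrt\delta)$ by introducing a third independent direction $\bm w$, showing $g_{\bm x}(\bm p)=g_{\bm w}(\bm p)$ with probability $1-O(\sqrt\delta)$, and taking a union bound. Rearranged, $g_{\bm x}(\bm p)=g_{\bm w}(\bm p)$ is the identity $f(\bm q-\bm x)-f(\bm q-\bm w)=f(\bm w)-f(\bm x)$, and each side equals $f(\bm w-\bm x)$ with high probability: the right side by applying $\eta_2$ to the i.i.d.\ pair $(\bm w,\bm x)$, and the left side by applying the very same difference identity to the pair $(\bm q-\bm x,\bm q-\bm w)$, whose law is the product Gaussian $\cN(\bm q,I)^{\otimes 2}$ rather than $\cN(\bm 0,I)^{\otimes 2}$. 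The shift is absorbed by a change of measure: Cauchy--Schwarz together with the Gaussian moment generating function gives, for every event $E$,
\[
    \Prob{\cN(\bm q,I)^{\otimes 2}}{E}\;\le\;e^{\|\bm q\|_2^2}\sqrt{\Prob{\cN(\bm 0,I)^{\otimes 2}}{E}}\;\le\;e^{1/2500}\sqrt{\delta},
\]
using $\|\bm q\|_2\le1/50$ (this is exactly why the radius $1/50$ was chosen). Combining the two estimates proves the claim.

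\textbf{Conclusions, and additivity.} Letting $p_v:=\Prob{\bm x}{g_{\bm x}(\bm p)=v}$, the collision probability equals $\sum_v p_v^2\ge1-O(\sqrt\delta)$, while $\sum_v p_v^2\le\max_v p_v$; hence some value is attained with probability $\ge1-O(\sqrt\delta)$. This value is, by definition, $g(\bm p)$, so for $\delta$ small $g$ is well-defined on all of $\R^n$ and $\Prob{\bm x}{g(\bm p)\neq g_{\bm x}(\bm p)}=O(\sqrt\delta)<1/2$, which is the ``furthermore'' assertion. For additivity I would first establish $g(\bm p_1+\bm p_2)=g(\bm p_1)+g(\bm p_2)$ for $\bm p_1,\bm p_2$ in a small ball (where $\kappa\equiv1$ and the relevant marginals conditions hold approximately): replace each $g(\bm p_i)$ by a representative $g_{\bm x_i}(\bm p_i)$ (valid up to $O(\sqrt\delta)$ by the previous step) and chain the three two-query expressions into a single representative $g_{\bm x_3}(\bm p_1+\bm p_2)$ using the oddness check, the three-point check --- whose $1/\sqrt2$ normalization is precisely what keeps the ``combined'' query point standard-Gaussian distributed, so that $g_{\bm x_3}(\bm p_1+\bm p_2)\approx g(\bm p_1+\bm p_2)$ is itself licensed --- and the change-of-measure bound above to absorb the small $\cN(\bm p_i,I)$-type shifts. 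Only constantly many substitutions enter, each failing with probability $O(\sqrt\delta)$, so the favorable event has positive probability and the deterministic identity must hold; additivity then extends from the ball to all of $\R^n$ via the homogeneous $\kappa$-definition together with the fact, already used in constructing $g$, that an approximately additive function respects rational scaling.

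\textbf{Main obstacle.} The collision bound is routine once the change-of-measure idea is available; the delicate step is the additivity chaining --- arranging every intermediate query point to carry exactly the marginal law for which a check was verified, and then gluing the in-ball analysis to the global $\kappa$-homogeneous extension --- which is the ``hairy'' bookkeeping alluded to in \Cref{sec:intro-reals}.
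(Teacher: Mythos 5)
You are reproving an imported result here: the paper never proves \autoref{lem:FY1} — it is quoted verbatim as \cite[Lemma 8]{FlemingYoshida20} and used as a black box — so the only meaningful comparison is with Fleming--Yoshida's own argument, whose outline (BLR-style checks, majority self-correction, handling the shifted marginals) your sketch does follow. The first half of your proposal is essentially sound: extracting a small per-check failure probability $\delta$ from the $1/10$ acceptance bound, the pairwise-collision estimate for $g_{\bm x}(\bm p)$ via the Gaussian change of measure (your bound $\Prob{\cN(\bm q,I)^{\otimes 2}}{E}\le e^{\|\bm q\|_2^2}\sqrt{\Prob{\cN(\bm 0,I)^{\otimes 2}}{E}}$ is correct, and this is indeed where the radius $1/50$ earns its keep), and the collision-probability-to-majority step giving well-definedness and the $<1/2$ bound.

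The genuine gap is in the additivity half, which is the actual content of the lemma. Your plan is to prove $g(\bm p_1)+g(\bm p_2)=g(\bm p_1+\bm p_2)$ inside a small ball and then extend ``via the homogeneous $\kappa$-definition together with the fact that an additive function respects rational scaling.'' But $\kappa_{\bm p}=50\|\bm p\|_2$ is a real, generically irrational scalar, and an additive function is only $\Q$-homogeneous; nothing in additivity-on-the-ball lets you relate $\kappa_{\bm p_1}h(\bm p_1/\kappa_{\bm p_1})+\kappa_{\bm p_2}h(\bm p_2/\kappa_{\bm p_2})$ to $\kappa_{\bm p_1+\bm p_2}h((\bm p_1+\bm p_2)/\kappa_{\bm p_1+\bm p_2})$ when the three contraction factors differ, and \autoref{alg:subroutines_noisy} contains no dilation check of the form $f(\lambda\bm x)=\lambda f(\bm x)$ from which real homogeneity along rays could be harvested. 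So the ball-to-global gluing does not follow from the ingredients you list; it is precisely the part of Fleming--Yoshida's Lemma 8 that needs a dedicated argument (and the ``hairy details'' the paper defers to them). In addition, the in-ball chaining itself — arranging the substitutions so that every intermediate query carries a marginal for which one of the three checks (including the $1/\sqrt2$-normalized one) was actually verified — is only gestured at, and since the statement being proved is exactly this lemma, those steps cannot be waved through as routine.
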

We are now ready to prove the main theorem of this section.
\begin{proof}
[Proof of \autoref{thm:additivity-and-linearity-real-tester}]
We first note that the number of queries made by the tester is at most $10 N_{\ref{alg:subroutines_noisy}}+3N_{\ref{alg:zero-mean-additivity}}=O(1/\eps)$. Next, we show the correctness of the tester.
    The proof of completeness remains the same as in \cite{FlemingYoshida20}: if $f$ is linear, \autoref{alg:subroutines_noisy} always accepts, and moreover, (i) $g\equiv f$ and, (ii) for all $\bm p,\bm x\in\R^n$, $g(\bm p)=g_{\bm x}(\bm p)$, ensuring \autoref{alg:zero-mean-additivity} also accepts $f$.

For the soundness, assume $f$ is $\eps$-far from all additive functions. 
If \autoref{alg:subroutines_noisy} rejects w.p. $>2/3$, we are done. 
We next show that if this is not the case, in particular, \autoref{alg:subroutines_noisy} accepts w.p. $\ge 1/3$, then the probability that one of the iterations of \autoref{step:fp-vs-gxp} rejects is $\geq 2/3$.
We next claim that, in this case, the probability of one iteration of \autoref{step:fp-vs-gxp} to accept is at most $1-\eps/2$, therefore, repeating $N_{\ref{alg:zero-mean-additivity}}=\ceil{4/\eps}$ independent times, the probability to accept in all of them is at most $(1-\eps/2)^{N_{\ref{alg:zero-mean-additivity}}}\leq (1-\eps/2)^{4/\eps}< 1/3$.
It is only left to show that one such iteration rejects w.p. $\ge \eps/2$. Indeed, (recall that we assume \autoref{alg:subroutines_noisy} accepts w.p. $\ge 1/3$)
\begin{align*}
    \Pr[\text{\autoref{step:fp-vs-gxp} rejects}]&=\Pr_{\substack{\bm p\sim\cD \\ \bm x\sim\cN(\bm 0,I)}}[f(\bm p)\neq g_{\bm x}(\bm p)]\\
    &\geq \Pr_{\substack{\bm p\sim\cD \\ \bm x\sim\cN(\bm 0,I)}}[f(\bm p)\neq g(\bm p)\wedge g(\bm p)=g_{\bm x}(\bm p)]\\
    &= \Pr_{\bm p\sim\cD}[f(\bm p)\neq g(\bm p)] \cdot \Pr_{\substack{\bm p\sim\cD\\ \bm x\sim\cN(\bm 0,I)}}[g(\bm p)=g_{\bm x}(\bm p)\mid f(\bm p)\neq g(\bm p)]\\
    &\geq  \eps/2.
\end{align*}
Where the last inequality follows from \autoref{lem:FY1}: By \autoref{lem:FY1}, the function  $g$ is additive. Since $f$ is $\eps$-far from all additive functions, it is in particular $\eps$-far from $g$ with respect to $\mathcal{D}$. Thus, $\Pr_{\bm p\sim\cD}[f(\bm p)\neq g(\bm p)] \geq \eps$. Also, by \autoref{lem:FY1}, we have 
$\Pr_{\bm x\sim\cN(\bm 0,I)}[g(\bm p)=g_{\bm x}(\bm p)]\ge 1/2$ for all $\bm p\in \R^n$, and thus 
\[\Pr_{\substack{\bm p\sim\cD\\ \bm x\sim\cN(\bm 0,I)}}[g(\bm p)=g_{\bm x}(\bm p)\mid f(\bm p)\neq g(\bm p)]=\Expc{\bm p\sim\cD}{\Pr_{\bm x\sim\cN(\bm 0,I)}[g(\bm p)=g_{\bm x}(\bm p)]\mid f(\bm p)\neq g(\bm p)}\ge 1/2.\qedhere\]
\end{proof}

We note that applying the same modification of the algorithm and the analysis improves upon the approximate tester for additivity, given in \cite[Theorem D.1]{ABFKY23}. To state this result, we first need to define concentrated distributions: 
\begin{definition}\label{def:conc-dist}
    A distribution $\mathcal{D}$ over $\R^n$ is \emph{$(\eps, R)$-concentrated} if most of its mass is concentrated in an open, $\ell_2$-ball $\ball(\bm 0,R)$ of radius $R$ centered at the origin, that is, 
    \[\Pr_{\bm p \sim \mathcal{D}}[\bm p\in\ball(\bm 0,R)] \geq 1- \eps.
\]
\end{definition}
As noted by \cite{ABFKY23} the standard Gaussian distribution is $(0.01,2\sqrt{n})$-concentrated.

\begin{theorem}\label{thm:approx-additivity}
    Let $\alpha, \varepsilon > 0$ and $\cD$ be an unknown $(\varepsilon/4,R)$-concentrated distribution. There exists a one-sided error, $O (1/\varepsilon)$-query tester for distinguishing between the case when $f$ is pointwise $\alpha$-close to some additive function and the case when, for every additive function $h$, $\Pr_{\bm p \sim \cD}[|f(\bm p)-h(\bm p)| > O(R n^{1.5}\alpha) ] > \eps$.
\end{theorem}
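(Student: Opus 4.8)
The plan is to mimic the proof of \autoref{thm:additivity-and-linearity-real-tester} almost verbatim, replacing the exact equality test $f(\bm p) \neq g_{\bm x}(\bm p)$ with an approximate one, and replacing the ingredients of \cite{FlemingYoshida20} by their approximate counterparts from \cite[Theorem D.1 and its supporting lemmas]{ABFKY23}. Concretely, I would use the approximate additivity subroutine of \cite{ABFKY23} (the analogue of \autoref{alg:subroutines_noisy} where each exact check becomes a check ``up to $O(\alpha)$''), and the approximate self-corrector $g$ they construct, for which the analogue of \autoref{lem:FY1} holds: if the subroutine accepts with constant probability, then $g$ is a well-defined additive function that is pointwise $O(R n^{1.5}\alpha)$-close to $f$ \emph{on the ball} $\ball(\bm 0,R)$, and moreover for every $\bm p$, $\Pr_{\bm x}[\,|g(\bm p)-g_{\bm x}(\bm p)| > O(Rn^{1.5}\alpha)\,] < 1/2$. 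Then in the modified tester (the analogue of \autoref{alg:zero-mean-additivity}) I would first run this subroutine, and then $\ceil{4/\eps}$ times sample $\bm p \sim \cD$, sample a single direction $\bm x \sim \cN(\bm 0,I)$, evaluate the \emph{one} opinion $g_{\bm x}(\bm p)$ with two queries, and reject if $|f(\bm p) - g_{\bm x}(\bm p)|$ exceeds the appropriate $O(Rn^{1.5}\alpha)$ threshold.

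For completeness: if $f$ is pointwise $\alpha$-close to some additive $h$, the subroutine accepts always (its checks have slack $\gtrsim \alpha$ built in), $g$ is forced to equal $h$ up to the allowed error, and each $g_{\bm x}(\bm p)$ is within the threshold of $f(\bm p)$, so the tester accepts with probability $1$ — hence one-sided error. For soundness, I would argue the contrapositive exactly as in the proof of \autoref{thm:additivity-and-linearity-real-tester}: assume the subroutine accepts with probability $\geq 1/3$ (otherwise we already reject with probability $> 2/3$); then $g$ is additive, and by hypothesis $\Pr_{\bm p\sim\cD}[\,|f(\bm p)-g(\bm p)| > O(Rn^{1.5}\alpha)\,] > \eps$; condition on such a ``bad'' $\bm p$ and use the approximate \autoref{lem:FY1} analogue to say that with probability $\geq 1/2$ over $\bm x$ we have $|g(\bm p)-g_{\bm x}(\bm p)| \le O(Rn^{1.5}\alpha)$, so by the triangle inequality $f(\bm p)$ and $g_{\bm x}(\bm p)$ are far and \autoref{step:fp-vs-gxp} rejects. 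This gives per-iteration rejection probability $\geq \eps/2$, and $(1-\eps/2)^{\ceil{4/\eps}} < 1/3$ finishes it. The query count is $O(1) + 3\cdot O(1/\eps) = O(1/\eps)$.

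The one genuinely new point — and the main obstacle — is the interaction of the $(\eps/4,R)$-concentration hypothesis with the error thresholds. In \cite{ABFKY23} the approximate self-corrector $g$ is only controlled \emph{inside} the ball $\ball(\bm 0, R)$ (or more precisely, the pointwise guarantee $|f-g| \le O(Rn^{1.5}\alpha)$ and the $1/2$-majority statement only hold where the Gaussian-based extrapolation is meaningful, which is why the distribution must put $\ge 1-\eps/4$ of its mass there). So I would be careful to separate the failure modes: a sampled $\bm p$ could fall outside $\ball(\bm 0,R)$ (probability $\le \eps/4$), in which case I make no claim and simply count that iteration as a potential ``miss''; but since the far-ness hypothesis gives probability $> \eps$ of a bad $\bm p$, and only $\eps/4$ of the mass is outside the ball, there is still probability $> \eps - \eps/4 = 3\eps/4 > \eps/2$ of drawing a bad $\bm p$ \emph{inside} the ball, which is all the soundness argument needs. (I would also double-check that the thresholds used by \Call{TestAdditivity}{}'s approximate analogue, the pointwise closeness of $g$ to $f$, and the reject threshold in \autoref{step:fp-vs-gxp} are chosen consistently so the triangle-inequality step goes through with a clean constant; this is the kind of routine constant-chasing that \cite{ABFKY23} already did, so I would cite it rather than redo it.) Since all other steps are identical to the proof of \autoref{thm:additivity-and-linearity-real-tester} and to \cite{ABFKY23}, I would omit the full write-up, as stated in the paper.
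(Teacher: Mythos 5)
Your proposal is correct and matches the paper's (omitted) proof: the paper proves \autoref{thm:approx-additivity} exactly by re-running the argument of \autoref{thm:additivity-and-linearity-real-tester} with the approximate subroutine and approximate self-corrector of \cite[Theorem D.1]{ABFKY23} in place of the exact ones, using a single opinion $g_{\bm x}(\bm p)$ per comparison point. Your extra bookkeeping for the $(\eps/4,R)$-concentration (discarding the $\le\eps/4$ mass outside $\ball(\bm 0,R)$, leaving bad-point probability $\ge 3\eps/4$ and per-iteration rejection probability $\ge 3\eps/8$, absorbed into the $O(1/\eps)$ repetition count) is precisely the right way to handle the only genuinely new point.
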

Its proof is the same as that of \autoref{thm:additivity-and-linearity-real-tester}, and is thus omitted.
\subsection{Low-degree testing}\label{sec:low-degree over the reals}
We next prove \cref{thm:exact-low-degree}.
The proof uses the same intuition as the proof of \autoref{thm:additivity-and-linearity-real-tester}. 

We start by bringing in the definition of the self corrected function $g$ from \cite{ABFKY23}. 
Let $\alpha_i  :=  {(-1)}^{i+1} {\binom{d+1}{i}}$, for all $i\in\{0,1,\hdots,d+1\}$, and  $r := (3d)^{-6}$. For any $\bm p \in \ball(\bm 0, \srad)$ and $\bm q \in \mathbb{R}^n$,
\begin{equation}
    \label{eq: definition of g low degree}
    g_{\bm q}(\bm p)  :=  \sum_{i=1}^{d+1} \alpha_i \cdot f(\bm p+i \bm q).
\end{equation}

The intuition is that $g_{\bm q}(\bm p)$ is the value of the univariate, degree-$d$ polynomial at the point $\bm p$, in the direction $q$, that is uniquely defined by the $d+1$ distinct evaluations $\{f(\bm p + i \bm q) : i \in [d+1]\}$.
We define the value of $g$ in two steps, first in the small open ball $\ball(\bm 0,\srad)$, and then outside it:\footnote{We again note that majority is not guaranteed a priori, but it is guaranteed if $f$ passes the first phase of the tester w.h.p., as shown by~\cite{ABFKY23}. We only use this definition for this case.}
\begin{itemize}
    \item For points $\bm p \in \ball(\bm 0,\srad)$, 
\[ g(\bm p)  :=  \maj_{\bm q \sim \mathcal{N}(\bm 0, I)} \left[ g_{\bm q}(\bm p) \right].\]
\item For points $\bm p \not \in \ball(\bm 0, \srad)$, we define $g(\bm p)$ by interpolating the evaluations of $g$ on points within $\ball(\bm 0,\srad)$ as follows: Consider the radial line $L_{\bm 0, \bm p} = \{x \bm p : x \in \mathbb{R}\}$ and fix $d+1$ (arbitrary) ``distinguished'' points along this line $c_0,\ldots, c_d \in \mathbb{R}$ such that $c_i \bm p \in \ball(\bm 0, \srad)$ for all $i$; in \autoref{alg:subroutines} we choose $c_i = ir/((d+1) \|\bm p\|_2)$. Let $p_{\bm p}\colon \mathbb{R}^n \to \mathbb{R}$ be the degree-$d$, univariate polynomial uniquely defined by these $d+1$ points, such that $p_{\bm p}(c_i)= g(c_i \bm p)$, for every $i\in[d+1]$. The value of $g(\bm p)$ is thus defined as $p_{\bm p}(1)$.
Note that if $g$ was a degree-$d$ polynomial to begin with, then we would indeed have $p_{\bm p}(1) = g(\bm p)$.
\end{itemize}
We next present the tester (\autoref{alg:low_degree_main_algorithm}), and its subroutines (\autoref{alg:subroutines}), including the modified subroutine \Call{Query-$g$}{}  it uses:

\begin{algorithm}[ht]
\caption{Distribution-Free Low-Degree Tester}\label{alg:low_degree_main_algorithm}
\Procedure{\emph{\Call{LowDegreeTester}{$f,d,\mathcal{D},\varepsilon$}}}{
  \Given {Query access to $f\colon \mathbb{R}^n \rightarrow \mathbb{R}$, degree $d\in\mathbb{N}$, sampling access to an unknown distribution $\mathcal{D}$, and farness parameter $\eps > 0$.}
    \textbf{Reject} if \emph{\Call{CharacterizationTest}{}}  \textbf{rejects}\;
    $N_{\ref{alg:low_degree_main_algorithm}} \gets O(1/\varepsilon)$\;
    \For{$N_{\ref{alg:low_degree_main_algorithm}}$ times}{
      Sample $\bm{p} \sim \mathcal{D}$\;
      \textbf{Reject} if $f(\bm{p}) \neq$ \emph{\Call{Query-$g$}{$\bm{p}$}}; {\label{step:fp-vs-query-g-p}}
    }
    \textbf{Accept};
  }  
\end{algorithm}

\begin{algorithm}[!h]
\caption{Subroutines}\label{alg:subroutines}
[Recall $\alpha_{i} :=  (-1)^{i+1}\binom{d+1}{i}$.]\\
\Procedure{\emph{\Call{CharacterizationTest}{}}}{
$N_{\ref{alg:subroutines}} \gets O(d^2)$ \;
\For{$N_{\ref{alg:subroutines}}$ times }{
\For{$j\in \{1,\dots, d+1\}$}{
    \For{$t\in \{0,\dots, d+1\}$}{
      Sample $\bm{p}\sim\mathcal{N}(\bm{0},j^2(t^2+1) I),\bm{q}\sim\mathcal{N}(\bm{0}, I)$;\Comment{[$j^2(t^2+1)$ vs. $1$ Test.]}\\
      \textbf{Reject} if $\sum_{i=0}^{d+1}\alpha_i\cdot f(\bm{p}+i\bm{q})\neq 0$\;
      Sample $\bm{p}\sim\mathcal{N}(\bm{0},j^2 I),\bm{q}\sim\mathcal{N}(\bm{0},(t^2+1) I)$;\Comment{[$j^2$ vs. $t^2+1$ Test.]}\\
      \textbf{Reject} if $\sum_{i=0}^{d+1}\alpha_i\cdot f(\bm{p}+i\bm{q})\neq 0$\;
      }
      Sample $\bm{p},\bm{q}\sim\mathcal{N}(\bm{0},j^2 I)$;\Comment{[$j^2$ vs. $j^2$ Test.]}\\
      \textbf{Reject} if $\sum_{i=0}^{d+1}\alpha_i\cdot f(\bm{p}+i\bm{q})\neq 0$\;
    }}
    \textbf{Accept}\;
    }
\Procedure{\emph{\Call{Query-$g$}{$\bm{p}$}}}{
$\srad\gets (3d)^{-6}$\;
Sample $\bm q\sim \mathcal{N}(\bm 0, I)$\;
\If{$\bm{p}\in \ball(\bm{0},\srad)$}{
\textbf{return} $g_{\bm q}(\bm p)$;\Comment{Where $g_{\bm q}(\bm p)= \sum_{i=1}^{d+1} \alpha_i \cdot f(\bm{p}+i\bm{q})$ as in \eqref{eq: definition of g low degree}}
}

\For{$i\in \{1,\dots,d+1\}$}{
 $c_i \gets i\srad/((d+1)\Vert \bm{p}\Vert_{2})$\;
 $v(c_i)\gets g_{\bm q}(c_i\bm p)$\;
}
Let $p_{\bm p} \colon \mathbb{R}\to\mathbb{R}$ be the unique degree-$d$ polynomial such that $p_{\bm p}(c_i) = v(c_i)$  for $i \in [d+1]$\;
\textbf{return} $p_{\bm p}(1)$\;
}
\end{algorithm}

We will need to use the following results from \cite{ABFKY23}:
\begin{lemma}[{\cite[Lemma 3.1 and Corollary 3.7]{ABFKY23}}]\label{lem:main_lemma_well_definedness_and_querying_g}
    If \Call{CharacterizationTest}{} fails with probability at most $2/3$, then $g$ is a well-defined, degree-$d$ polynomial, and furthermore 
    for every $\bm p \in \ball(\bm 0,r)$,
     \[ \Pr_{\bm{q}\sim\mathcal{N}(\bm{0},I)}[g(\bm p)\neq g_{\bm{q}}(\bm{p})] \leq \frac{1}{7d}. \]
\end{lemma}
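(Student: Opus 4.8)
I would run the standard self-correction analysis, keeping two observations in mind. Since $\alpha_i=(-1)^{i+1}\binom{d+1}{i}$ and $\sum_{i=0}^{d+1}\alpha_i=-(1-1)^{d+1}=0$, the rejection condition $\sum_{i=0}^{d+1}\alpha_i f(\bm z+i\bm w)\neq 0$ checked by \Call{CharacterizationTest}{} is exactly the non-vanishing of the $(d+1)$-st finite difference of $f$ along $\bm w$ at $\bm z$, and (using $\alpha_0=-1$) it is equivalent to $f(\bm z)= g_{\bm w}(\bm z)$. Moreover, because \Call{CharacterizationTest}{} performs $N_{\ref{alg:subroutines}}=\Theta(d^2)$ independent repetitions and rejects as soon as any single check fails, the hypothesis that it fails with probability at most $2/3$ implies that, for any one of the (polynomially many) variance pairs it uses, a single fresh check passes with probability at least $1-O(1/d^2)$, where the hidden constant shrinks as the constant in $N_{\ref{alg:subroutines}}$ grows.

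\textbf{Well-definedness and the $1/(7d)$ bound.} Fix $\bm p\in\ball(\bm 0,r)$ and draw $\bm q_1,\bm q_2\sim\mathcal{N}(\bm 0,I)$ independently. If the check $\sum_{j=0}^{d+1}\alpha_j f(\bm p+i\bm q_1+j\bm q_2)=0$ holds for every $i\in[d+1]$, then $f(\bm p+i\bm q_1)=\sum_{j=1}^{d+1}\alpha_j f(\bm p+i\bm q_1+j\bm q_2)$, and substituting into $g_{\bm q_1}(\bm p)=\sum_{i=1}^{d+1}\alpha_i f(\bm p+i\bm q_1)$ yields the symmetric double sum $g_{\bm q_1}(\bm p)=\sum_{i,j\in[d+1]}\alpha_i\alpha_j f(\bm p+i\bm q_1+j\bm q_2)$; symmetrically, the analogous $d+1$ checks in the other direction make $g_{\bm q_2}(\bm p)$ equal the same double sum, so $g_{\bm q_1}(\bm p)=g_{\bm q_2}(\bm p)$. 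Each of these $2(d+1)$ checks fails with probability at most $O(1/d^2)+d_{\mathrm{TV}}\bigl(\mathcal{N}(\bm p,i^2I),\mathcal{N}(\bm 0,i^2I)\bigr)\le O(1/d^2)+\|\bm p\|_2/(2i)$, because the (anchor, direction) pair it tests is a shifted copy, with shift of norm $<r$, of the ``$j^2(t^2+1)$ vs.\ $1$'' pair with $j=i,\ t=0$, and $r=(3d)^{-6}$ is negligible next to $1/d^2$. A union bound therefore gives $\Pr_{\bm q_1,\bm q_2}[g_{\bm q_1}(\bm p)\neq g_{\bm q_2}(\bm p)]\le 1/(7d)$ once the constant in $N_{\ref{alg:subroutines}}$ is chosen large enough. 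Writing $p_v=\Pr_{\bm q}[g_{\bm q}(\bm p)=v]$, we get $\max_v p_v\ge\sum_v p_v^2=\Pr_{\bm q_1,\bm q_2}[g_{\bm q_1}(\bm p)=g_{\bm q_2}(\bm p)]\ge 1-1/(7d)$, so the majority defining $g(\bm p)$ exists (it is the plurality value), and $\Pr_{\bm q}[g_{\bm q}(\bm p)\neq g(\bm p)]\le 1/(7d)$; this is the ``furthermore'' clause and also shows $g$ is well-defined on $\ball(\bm 0,r)$.

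\textbf{$g$ is a degree-$d$ polynomial.} It remains to show that the well-defined $g$ is (the restriction of) a genuine degree-$d$ polynomial, which I would do in three parts. First, establish the defining identity $\sum_{i=0}^{d+1}\alpha_i g(\bm p+i\bm q)=0$ for every $\bm p,\bm q$ with $\bm p+i\bm q\in\ball(\bm 0,r)$ for all $i$: fix such (deterministic) $\bm p,\bm q$, draw fresh Gaussian directions, and on the positive-probability event that enough of the relevant local checks pass — all the relevant (anchor, direction) marginals being $O(r)$-shifted copies of variance pairs the test checks, which is why \Call{CharacterizationTest}{} samples the several families ``$j^2(t^2+1)$ vs.\ $1$'', ``$j^2$ vs.\ $t^2+1$'', and ``$j^2$ vs.\ $j^2$'' — iterated substitution of $g(\cdot)=g_{\bm w}(\cdot)$ and of the test identities rewrites the deterministic quantity $\sum_i\alpha_i g(\bm p+i\bm q)$ as a combination of $f$-values that is meant to collapse to $0$; here the interpolation direction $\bm q$ is tiny and is never itself a Gaussian test direction, so it must be absorbed into shifts of the Gaussian directions (e.g.\ replacing $g_{\bm w}$ by $g_{\bm w-i\bm q}$) rather than handled as its own direction. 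Second, upgrade ``all $(d+1)$-st finite differences of $g$ vanish on $\ball(\bm 0,r)$'' to ``$g$ equals a degree-$d$ polynomial there'': this requires a regularity input to exclude pathological generalized polynomials (for $d=1$ this is excluding non-linear additive functions), and the input is that $g$ is bounded on $\ball(\bm 0,r)$, inherited from the hypothesis that $f$ is bounded on $\ball(\bm 0,L)$ — taking $L$ large enough that for each $\bm p\in\ball(\bm 0,r)$ some bounded-norm $\bm q$ gives $g(\bm p)=g_{\bm q}(\bm p)$ with all $\bm p+i\bm q$ inside $\ball(\bm 0,L)$, whence $|g(\bm p)|\le (d+1)2^{d+1}\sup_{\ball(\bm 0,L)}|f|$. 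Third, observe that the off-ball definition of $g$ is precisely degree-$d$ univariate interpolation along each ray through the origin, which reproduces a degree-$d$ polynomial exactly, so $g$ agrees globally with the polynomial produced by the second part.

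\textbf{Main obstacle.} The two reductions, the two-direction argument, and the plurality argument are routine. The real work is in the first part of the polynomiality step: because the interpolation direction $\bm q$ is tiny and not a test direction, showing that $g$ inherits the $(d+1)$-st finite-difference identity requires carefully iterating the self-correction so that every shifted Gaussian $i\bm q_1+j\bm q_2+\cdots$ that appears is matched to one of the test's variance pairs — this matching is exactly what dictates the list of families in \Call{CharacterizationTest}{} — together with the rigidity-plus-boundedness argument that turns the identity into an honest polynomial. This is the step I expect to be the crux (and is the content of \cite[Lemma 3.1 and Corollary 3.7]{ABFKY23}).
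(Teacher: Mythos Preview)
This lemma is not proved in the paper under review: it is stated as a direct citation of \cite[Lemma~3.1 and Corollary~3.7]{ABFKY23} and used as a black box in the proof of \Cref{thm:exact-low-degree}. There is therefore no proof in this paper against which to compare your proposal.

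For what it is worth, your sketch follows the standard self-correction template that \cite{ABFKY23} employs: (i) pass from ``\Call{CharacterizationTest}{} accepts with constant probability'' to ``each individual variance-pair check passes with probability $1-O(1/d^2)$''; (ii) the two-direction symmetric-double-sum argument to show $\Pr[g_{\bm q_1}(\bm p)\neq g_{\bm q_2}(\bm p)]$ is small, whence the plurality exists and equals $g(\bm p)$ with the stated probability; and (iii) bootstrap the finite-difference identity to $g$ itself and invoke a boundedness/regularity input to conclude $g$ is an honest polynomial. Your identification of step (iii) --- matching the shifted anchors and tiny interpolation directions to the variance families actually sampled by \Call{CharacterizationTest}{} --- as the technical crux is accurate, and is precisely what the cited lemmas in \cite{ABFKY23} establish.
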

\begin{proof}[Proof of \autoref{thm:exact-low-degree}]
    The proof of completeness remains the same, as in \cite{ABFKY23}. If $f$ is a degree-$d$, $n$-variate polynomial, then $f$ restricted any line $\{\bm p+i\bm q:i\in\R\}$ is a degree-$d$, univariate polynomial. So, \Call{CharacterizationTest}{} always accepts. Moreover, $g\equiv f$ and for all $\bm p,\bm q\in \R^n$ we have $g_{\bm q}(\bm p)=f(\bm p)$, so \autoref{alg:low_degree_main_algorithm} always accepts.

    In case $f$ is $\eps$-far from all degree-$d$, $n$-variate polynomials. If \Call{CharacterizationTest}{} rejects w.p. $>2/3$, we are done. Otherwise, by \autoref{lem:main_lemma_well_definedness_and_querying_g}, we have that $g$ is degree-$d$, $n$-variate polynomial. We will prove the following:
    \begin{claim}\label{clm:query-g correct}
        For any $\bm p\in \R^n$, the subroutine \Call{Query-$g$}{$\bm p$} evaluates $g(\bm p)$ correctly with probability at least $1/2$.
    \end{claim}
    Assuming \autoref{clm:query-g correct}, the argument follows similarly to the proof of \autoref{thm:additivity-and-linearity-real-tester}: The probability of a single iteration of \autoref{step:fp-vs-query-g-p} to accept is at most $1-\eps/2$; so that, in $N_{\ref{alg:low_degree_main_algorithm}}=\lceil 4/\eps\rceil$ such (independent) iterations, the probability to accept is at most $(1-\eps/2)^{N_{\ref{alg:low_degree_main_algorithm}}} \leq (1-\eps/2)^{4/\eps}<1/3$.
    \end{proof}
    It is only left to prove the claim. 
    \begin{proof}[Proof of \autoref{clm:query-g correct}]
        We have two cases:
        \begin{itemize}
            \item If $\bm p\in \ball(0,\srad)$, then \Call{Query-$g$}{$\bm p$} return $g_{\bm q}(\bm p)$ for some random $\bm q\sim \mathcal{N}(\bm{0},I)$. By \autoref{lem:main_lemma_well_definedness_and_querying_g}, this is equals to $g(\bm p)$ with probability at least $1-\frac{1}{7d}\geq 1/2$. 
            \item If $\bm p\not\in \ball(0,\srad)$, then \Call{Query-$g$}{$\bm p$} interpolates a degree-$d$, univariate polynomial over the evaluations of $g$ in the small open ball on $d+1$ distinct points on the line $L_{\bm 0,\bm p}$. If all evaluation on the $d+1$ interpolating points are correct, then the returned value $p_{\bm p}(1)$ is indeed the correct value of $g(\bm p)$. By \autoref{lem:main_lemma_well_definedness_and_querying_g}, for each point $c_i\bm p\in\ball(\bm 0,\srad)$, we have $g_{\bm q}(c_i\bm p)\neq g(c_i\bm p)$ with probability most $1-\frac{1}{7d}$. Thus, by a union bound, we have,
            \[\Pr[\text{\Call{Query-$g$}{$\bm p$}}=g(\bm p)]\geq \Pr_{\bm q\sim \mathcal{N}(\bm 0,I)}[g(c_i\bm p)=g_{\bm q}(\bm p),\ \forall i\in[d+1]]\geq 1-\frac{d+1}{7d}\ge 1/2.\qedhere\]
        \end{itemize}
    \end{proof}
 The following theorems are all shaving off the $\log \frac{1}{\eps}$ factor from the query complexity of the corresponding theorems by \cite{ABFKY23}. Their proofs are the same as above, thus we omit them. 

The following is an improvement of the approximate tester of \cite[Theorem 1.2]{ABFKY23}. \begin{theorem}\label{thm:approx-low-degree}
    Let $d\in\mathbb{N}$, for $L>0$, suppose $f: \R^n \to \R$ is a function that is bounded in  $\ball(\bm{0}, L)$, and for $\eps \in (0,1), R>0$, let $\cD$ be an $(\eps/4, R)$-concentrated distribution.   Given $\alpha>0, \beta\geq 2^{(2n)^{O(d)}}(R/L)^d\alpha$, query access to $f$, and sampling access to $\mathcal{D}$, there is a one-sided error, $O (d^5+d^2/\varepsilon)$-query tester which distinguishes between the cases when:
\begin{itemize}
    \item $f$ is pointwise $\alpha$-close to some degree-$d$ polynomial, say $h$, i.e., $|f(\bm x)-h(\bm x)|\leq\alpha$, for every $\bm x\in\R^n$, and the case when,
    \item for every degree-$d$ polynomial $h\colon\mathbb{R}^n\to\mathbb{R}$, $\Pr_{\bm p \sim \cD}[|f(\bm p)-h(\bm p)| > \beta ] > \eps$.
\end{itemize} 
\end{theorem}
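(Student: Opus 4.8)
The plan is to reprise the proof of \autoref{thm:exact-low-degree} almost verbatim, replacing every exact comparison by a noise‑tolerant one with a carefully chosen threshold, and using the \emph{approximate} versions of \Call{CharacterizationTest}{} and of the evaluation comparisons from~\cite{ABFKY23} in place of the exact ones. Concretely, I would run \autoref{alg:low_degree_main_algorithm} but (i) with each check of \Call{CharacterizationTest}{} relaxed to ``reject if $|\sum_{i=0}^{d+1}\alpha_i f(\bm p+i\bm q)|>\tau$'' for a suitable $\tau = 2^{O(d)}\alpha$, and (ii) with \autoref{step:fp-vs-query-g-p} relaxed to ``reject if $|f(\bm p)-\Call{Query-$g$}{\bm p}|>\beta/4$''. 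The key point — the same one exploited in \autoref{thm:additivity-and-linearity-real-tester} and \autoref{thm:exact-low-degree} — is that \Call{Query-$g$}{} only needs to return a sufficiently accurate value of $g(\bm p)$ with \emph{constant} probability, so a single sampled direction $\bm q\sim\cN(\bm 0,I)$ suffices and the $O(\log(1/\eps))$ plurality used by~\cite{ABFKY23} can be dropped. Counting queries then gives $O(d^2)\cdot(d+1)\cdot(d+2)\cdot O(d)=O(d^5)$ for the characterization test and $N_{\ref{alg:low_degree_main_algorithm}}=O(1/\eps)$ calls to \Call{Query-$g$}{}, each costing $O(d^2)$ queries, for $O(d^5+d^2/\eps)$ in total.

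For completeness I would argue exactly as in \autoref{thm:exact-low-degree}: if $f$ is pointwise $\alpha$‑close to a degree‑$d$ polynomial $h$, the vanishing of the $(d+1)$‑st finite difference of $h$ makes $|\sum_i\alpha_i f(\bm p+i\bm q)|\le(\sum_i|\alpha_i|)\alpha=2^{O(d)}\alpha\le\tau$ for \emph{all} $\bm p,\bm q$, so \Call{CharacterizationTest}{} never rejects; and the value returned by \Call{Query-$g$}{\bm p} deviates from $h(\bm p)$ by at most the worst‑case amplification of a $2^{O(d)}\alpha$ error through Lagrange interpolation on the nodes $c_i=i\srad/((d+1)\|\bm p\|_2)$, which the hypothesis $\beta\ge 2^{(2n)^{O(d)}}(R/L)^d\alpha$ is designed to dominate (this is where the boundedness of $f$ on $\ball(\bm 0,L)$ enters), so $|f(\bm p)-\Call{Query-$g$}{\bm p}|\le\beta/4$ always. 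Since these completeness bounds are deterministic — they never invoke the probabilistic plurality guarantee — the tester never rejects in the ``close'' case, giving one‑sided error.

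For soundness, suppose every degree‑$d$ polynomial $h$ satisfies $\Pr_{\bm p\sim\cD}[|f(\bm p)-h(\bm p)|>\beta]>\eps$. If \Call{CharacterizationTest}{} rejects with probability $>2/3$ we are done; otherwise the approximate analog of \autoref{lem:main_lemma_well_definedness_and_querying_g} from~\cite{ABFKY23} yields a degree‑$d$ polynomial $h^\star$ with $|g(\bm x)-h^\star(\bm x)|\le\gamma$ on $\ball(\bm 0,R)$ for some $\gamma\ll\beta$, together with $\Pr_{\bm q}[|g(\bm p)-g_{\bm q}(\bm p)|>\gamma]\le 1/(7d)$ for $\bm p\in\ball(\bm 0,\srad)$. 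Using $(\eps/4,R)$‑concentration of $\cD$ and the hypothesis with $h=h^\star$, a sample $\bm p$ lies in $\ball(\bm 0,R)$ with $|f(\bm p)-h^\star(\bm p)|>\beta$ — hence $|f(\bm p)-g(\bm p)|>\beta-\gamma$ — with probability $>3\eps/4$. The approximate analog of \autoref{clm:query-g correct} (the single‑direction guarantee inside the ball; a union bound over the $d+1$ interpolation nodes, since $(d+1)/(7d)\le 2/7$, plus the deterministic Lagrange‑amplification bound outside it) shows that \Call{Query-$g$}{\bm p} lands within a $\gamma$‑scale error of $g(\bm p)$ with probability $\ge 1/2$; since $\bm p$ and the internal $\bm q$ are independent, a single iteration of \autoref{step:fp-vs-query-g-p} rejects with probability at least $\tfrac{3\eps}{4}\cdot\tfrac12\ge\tfrac{\eps}{4}$ (the parameter choice keeps $\beta-\gamma$ minus the amplified $\gamma$‑scale error above $\beta/4$). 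Hence $N_{\ref{alg:low_degree_main_algorithm}}=\lceil 8/\eps\rceil$ iterations accept with probability at most $(1-\eps/4)^{8/\eps}<1/3$, and folding in the case split on \Call{CharacterizationTest}{} exactly as in \autoref{thm:exact-low-degree} gives rejection probability $\ge 2/3$.

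The only place real work hides is the worst‑case bound controlling how an error in the $d+1$ approximate evaluations $g_{\bm q}(c_i\bm p)$ blows up after interpolation to $p_{\bm p}(1)$; this is what forces the factor $2^{(2n)^{O(d)}}(R/L)^d$ in the statement, and in a from‑scratch write‑up it would be the main obstacle. But it is precisely the estimate already established in~\cite{ABFKY23}, and our modification — using one random direction instead of $O(\log(1/\eps))$ — does not interact with it; it only weakens the per‑point correctness of \Call{Query-$g$}{} from $1-O(\eps)$ to $\ge 1/2$, which, as above, is all the soundness argument needs. That is why \autoref{thm:approx-low-degree} follows by a line‑by‑line translation of the proof of \autoref{thm:exact-low-degree} and can safely be omitted.
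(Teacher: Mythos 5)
Your proposal is correct and follows essentially the same route as the paper, which omits this proof precisely because it is the proof of \autoref{thm:exact-low-degree} transplanted onto the approximate tester and lemmas of \cite{ABFKY23}: the same single-direction \textsc{Query-$g$}, the same constant-probability-per-evaluation observation replacing the $O(\log(1/\eps))$ amplification, and the same reliance on the already-established interpolation-error bounds for the $2^{(2n)^{O(d)}}(R/L)^d$ factor. The threshold bookkeeping you add (e.g.\ $\beta/4$, $3\eps/4\cdot\tfrac12$) is consistent with that intended argument.
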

The next theorem is an improvement of the discrete low-degree tester, that works over rational lattices, \cite[Theorem 1.3]{ABFKY23}.
\begin{theorem}\label{thm:discrete-low-degree}
    For $d, B, R>0$, let $B'\geq 16\cdot\max\{n^{5/2+2d} d^{2d},B^2R^2/\sqrt{n}\}$ be a multiple of $B$. Let $\cL = \frac{1}{B}\mathbb{Z}^n$ and $\cL' = \frac{1}{B'} \mathbb{Z}^n$.
Given $\eps>0$, query access to a function $f: \R^n \to \R$, and sample access to an unknown $(\eps/4, R)$-concentrated distribution $\cD$ supported on $\cL$, there is a one-sided error,  $O(d^5+d^2/\varepsilon)$-query tester for testing whether $f$ is a degree-$d$ polynomial, or is $\eps$-far from every degree-$d$ polynomials over $\mathcal{D}$. The tester queries $f$ on points in $\cL'$.
\end{theorem}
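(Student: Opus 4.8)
The plan is to run the discrete low‑degree tester of \cite[Theorem 1.3]{ABFKY23} — that is, \autoref{alg:low_degree_main_algorithm} together with \autoref{alg:subroutines}, but with every Gaussian direction $\bm q$ replaced by a suitably discretized Gaussian supported on $\cL'=\tfrac1{B'}\mathbb{Z}^n$ (for instance, the coordinate‑wise rounding of $\cN(\bm 0,I)$ to the nearest multiple of $1/B'$) — and to apply to \Call{Query-$g$}{} the same ``one direction suffices'' modification we used for \autoref{thm:exact-low-degree}. Concretely, on a query $\bm p$ we sample a single discretized direction $\bm q\in\cL'$; if $\bm p\in\ball(\bm 0,\srad)$ we return $g_{\bm q}(\bm p)=\sum_{i=1}^{d+1}\alpha_i f(\bm p+i\bm q)$, and otherwise we interpolate the degree‑$d$ univariate polynomial through the $d+1$ distinguished lattice points $c_i\bm p\in\ball(\bm 0,\srad)\cap\cL'$ on the radial line $L_{\bm 0,\bm p}$, each value $v(c_i)=g_{\bm q}(c_i\bm p)$ being obtained from that same single direction. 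Since $B'$ is a multiple of $B$ we have $\cL\subseteq\cL'$, so $\bm p\in\cL$ together with $\bm q\in\cL'$ forces every queried point $\bm p+i\bm q$ (and every $c_i\bm p+j\bm q$) into $\cL'$, as required. With $N_{\ref{alg:subroutines}}=O(d^2)$ and $N_{\ref{alg:low_degree_main_algorithm}}=O(1/\eps)$ as in the algorithm, the query count is $O(d^5)$ for \Call{CharacterizationTest}{} plus $O(d^2)$ per comparison iteration, i.e.\ $O(d^5+d^2/\eps)$ in total.

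For completeness I would note that if $f$ is a degree‑$d$ polynomial then its restriction to any arithmetic progression $\{\bm p+i\bm q : i\in\mathbb{Z}\}$ is a degree‑$d$ univariate polynomial, so the identity $\sum_{i=0}^{d+1}\alpha_i f(\bm p+i\bm q)=0$ holds exactly (on lattice progressions as well), \Call{CharacterizationTest}{} always accepts, $g\equiv f$, and $g_{\bm q}(\bm p)=f(\bm p)$ for every $\bm q$, so \Call{Query-$g$}{} always returns $f(\bm p)$ and the tester accepts; the discretization is irrelevant here. For soundness I would invoke the lattice analogue of \autoref{lem:main_lemma_well_definedness_and_querying_g} already proved in \cite{ABFKY23}: if \Call{CharacterizationTest}{} fails with probability at most $2/3$ then $g$ is a well‑defined degree‑$d$ polynomial and $\Pr_{\bm q}[g(\bm p)\neq g_{\bm q}(\bm p)]\le 1/(7d)$ for every $\bm p\in\ball(\bm 0,\srad)$, with $\bm q$ the discretized Gaussian. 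Given this, the discrete analogue of \autoref{clm:query-g correct} goes through verbatim — inside the small ball a single direction is correct with probability $1-1/(7d)\ge 1/2$, outside it a union bound over the $d+1$ interpolation points gives correctness probability $1-(d+1)/(7d)\ge 1/2$ — and then, assuming \Call{CharacterizationTest}{} accepts with probability at least $1/3$, each comparison iteration rejects with probability $\Omega(\eps)$ (using that $g$ is degree‑$d$, that $f$ is $\eps$‑far from degree‑$d$ polynomials under $\cD$, and that $\cD$ is $(\eps/4,R)$‑concentrated, so the relevant points land in $\ball(\bm 0,R)$ up to an $\eps/4$ loss). Hence $N_{\ref{alg:low_degree_main_algorithm}}=O(1/\eps)$ iterations push the rejection probability above $2/3$ by the same $(1-\Omega(\eps))^{O(1/\eps)}<1/3$ estimate as in \autoref{thm:exact-low-degree}.

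The one genuinely new point, and the step I expect to need the most care, is verifying that the discretization does no harm — and this is exactly what the hypothesis $B'\ge 16\max\{n^{5/2+2d}d^{2d},\ B^2R^2/\sqrt n\}$ is for: it must be fine enough that (i) for every $\bm p$ in the concentration ball the radial line $L_{\bm 0,\bm p}$ contains $d+1$ distinct points of $\ball(\bm 0,\srad)\cap\cL'$ (using $B'$ large relative to $BR$ and $\srad=(3d)^{-6}$), and (ii) the rounded Gaussian is close enough to $\cN(\bm 0,I)$ that the anti‑concentration and near‑pairwise‑independence estimates underlying the well‑definedness lemma still hold (using the $n^{5/2+2d}d^{2d}$ term). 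All of this is already established in \cite{ABFKY23}; our modification — using a single direction per evaluation of $g$ rather than amplifying its error down to $O(\eps)$ — is orthogonal to the discretization, so the two changes compose without interaction. For this reason the full argument is essentially identical to the proof of \autoref{thm:exact-low-degree}, and we omit it.
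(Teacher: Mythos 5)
Your proposal matches the paper's intended argument: the paper itself omits the proof of this theorem, stating only that it follows by applying the same single-direction \Call{Query-$g$}{} modification (as in \autoref{thm:exact-low-degree}) to the discrete tester of \cite[Theorem 1.3]{ABFKY23}, which is exactly what you do, with the discretization details correctly deferred to \cite{ABFKY23}.
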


\section*{Acknowledgment} 
This work was written while the authors were visiting the Simons Institute for the Theory of Computing. UM and EK wish to thank Sofya Raskhodnikova for fruitful discussions concerning with the online model. VA wishes to thank Philips George John for his timely help with handling formatting issues.

\short{\bibliographystyle{plainurl}}{\bibliographystyle{alpha}}
\bibliography{references}

\end{document}